\renewcommand\@biblabel[1]{#1.}
\theoremstyle{plain}
\newtheorem{theorem}{Theorem}[section]
\newtheorem{lemma}[theorem]{Lemma}
\theoremstyle{definition}
\theoremstyle{remark}
\newcommand{\T}{^{\prime}}
\begin{document}


\title{Robust Estimation for Linear Panel Data Models}

\author{
\name{Beste Hamiye Beyaztas\textsuperscript{a} and Soutir Bandyopadhyay\textsuperscript{b} \thanks{CONTACT Beste Hamiye Beyaztas. Email: beste.sertdemir@medeniyet.edu.tr}}
\affil{\textsuperscript{a} Department of Statistics, Istanbul Medeniyet University, Goztepe-Northern Campus, 34720, Istanbul, Turkey. \\
\textsuperscript{b} Department of Applied Mathematics and Statistics, Colorado School of Mines, Golden, CO 80401, USA.}
}

\maketitle

\begin{abstract}
In different fields of applications including, but not limited to, behavioral, environmental, medical sciences and econometrics, the use of panel data regression models has become increasingly popular as a general framework for making meaningful statistical inferences. However, when the ordinary least squares (OLS) method is used to estimate the model parameters, presence of outliers may significantly alter the adequacy of such models by producing biased and inefficient estimates. In this work we propose a new, weighted likelihood based robust estimation procedure for linear panel data models with fixed and random effects. The finite sample performances of the proposed estimators have been illustrated through an extensive simulation study as well as with an application to blood pressure data set. Our thorough study demonstrates that the proposed estimators show significantly better performances over the traditional methods in the presence of outliers and produce competitive results to the OLS based estimates when no outliers are present in the data set.
\end{abstract}

\begin{keywords}
Panel data, Fixed effects, Random effects, Robust estimation, Weighted likelihood, Least squares.
\end{keywords}

\section{Introduction} \label{Sec:1}
Panel data, also known as longitudinal data in biological sciences, are two-dimensional data in which cross-sectional measurements are observed over time. 
These type of data typically allow us to take into account the unobserved individual-specific heterogeneity as well as the intra-individual dynamics (cf. \cite{Baltagi2005} and \cite{Hsiao1985} for more details), and therefore, in general, are more informative and yield more degrees of freedom, less collinearity between the variables and more efficiency than a single cross-sectional or time-series data, thereby improving the accuracy and precision in the inference of model parameters.  

Since the seminal paper of \cite{Balestra1966}, panel data have received growing attention in many empirical and methodological studies. As pointed out in \cite{Hsiao2007}, the main sources leading to the improvements in panel data studies 
include (i) increased availability of such data, (ii) better capability to model the complexity of human behavior than a pure cross-section or time series data, and (iii) demanding methodology. In this context, the linear panel data regression models have become most widely applied statistical methods to analyze two-dimensional data in many fields, such as econometrics, biostatistics etc. For a comprehensive review on static linear panel data models and its applications in different areas see \cite{Baltagi2005}, \cite{Fitzmaurice2004}, \cite{Greene2003}, 
  \cite{Gardiner2009}, \cite{Laird1982}, \cite{Maddala1973}, \cite{Mundlak1978}, \cite{Diggle2002}, \cite{Hill2007}, \cite{Wallace1969}, \cite{Wooldridge2002}, and the references therein. 

In panel data studies, three main sources of variability are generally considered, namely, (i) the within variation, i.e., the variation from observation to observation in each of cross-sectional unit,  (ii) the between variation, i.e., the variation in observations from an individual unit to another individual unit, and (iii) the overall variation, i.e., the variation over both dimensions, since panel data include the information over two dimensions, cross-sectional and time series (cf. \cite{Bala2014} and \cite{Kennedy2003} for details). 
The statistical appeal of panel data models 
 typically lies in the fact that these models focus particularly on explaining within variations over time and provide controls over individual heterogeneity. The most commonly used panel data models are fixed and random effects models 
(cf. \cite{Bala2014} and \cite{Zhang2010}). In the fixed effect approach, subject-specific means (individual-specific effects, individual heterogeneity), which belong to each cross-sectional unit, are assumed to be fixed and are included as time-invariant intercept terms in the regression model, while these may vary across subjects. On the other hand, in random effect models, individual heterogeneity is explained by the differences in the error variance components. As noted in \cite{Mundlak1978}, the main difference between the fixed and random effects models is that the fixed effects model assumes that the time-invariant characteristics of individuals are correlated with the covariates, whereas random effects model does not allow such correlation. 

Typically, one uses OLS methods for making statistical inferences regarding the parameters of linear panel data regression models. However, to obtain consistent estimates of the model parameters, traditional estimation techniques require some assumptions such as strict exogeneity with respect to the error terms and homoscedasticity of the error terms, which are rarely fulfilled in practice. Hence, the classical OLS estimators may considerably be affected due to any departure from the model assumptions as well as the presence of outliers. The outlying observations are generally masked due to the complex nature of the data and not directly detectable using standard outlier diagnostics. Moreover, the OLS based estimators are highly sensitive to the leverage points due to the distortions being caused by the outliers in the covariates. Thus, the well-known estimators, such as generalized least squares (GLS) estimator for random effects model and fixed effects estimators based on several transformations, may lead us to incorrect and unreliable results. To overcome these issues, \cite{Bramati2007} have considered alternatives to the fixed effect estimator for the purpose of building highly robust procedures with high breakdown point. More recently, \cite{Aquaro2013} have proposed a new estimation procedure based on two different data transformations by applying standard robust estimation methods in the fixed effects linear panel data framework. A robust algorithm based on the idea of weighting down the large order statistics of squared residuals has been proposed in \cite{Visek2015} to obtain reliable estimates of the model parameters. To the best of our knowledge, only a few studies considering the robustness of conventional estimation methods are available in the context of static linear panel data models; see, for instance, \cite{Aquaro2013}, \cite{Bramati2007}, \cite{Namur2011}, \cite{Visek2015} and \cite{Wagenvoort2002}.

This paper aims to study the impacts of outlying observations on the OLS based estimation methods (such as between, pooled OLS, fixed effects and random effects estimators as discussed in Section~\ref{Sec:2}) in linear panel data models and suggest robust alternatives to these estimation procedures. 
The proposed weighted likelihood based estimators, based on weighted likelihood estimating equations introduced in \cite{Markatou1997}, produce more robust estimates compared to their traditional counterparts in the presence of outlier(s) or in case of any departure from model assumptions and their asymptotic properties are equivalent to the OLS based techniques when no outliers are present in the data. In this study, we focus on the impacts of several types of outliers including vertical outliers and leverage points (random and concentrated) on the estimation procedures. Monte Carlo experiments under different data generating processes and contamination schemes are used to compare the finite sample performances of the proposed estimators and traditional OLS based estimators. The numerical results support that the proposed methods yield more accurate and precise estimates compared to the OLS estimators when the data have outliers. 

The rest of the paper is organized as follows. We start with providing details about the static linear panel data models and discuss the OLS based estimation methods commonly used to estimate the parameters (cf. Section~\ref{Sec:2.1}). In Section~\ref{Sec:2.2} we describe the estimation method based on weighted likelihood and subsequently propose the robust counterparts of the OLS estimators. The finite sample properties of the proposed methods are illustrated through an extensive simulation study and the results are compared with traditional estimation methods in Section~\ref{Sec:3}. To further validate the applicability of our proposed methods, we apply those to blood pressure data. The results are presented in Section \ref{Sec:4}. 

\section{Linear Panel Data Models} \label{Sec:2}
Let us consider the linear panel data regression model with a random sample $\left\lbrace \left( y_{it}, x_{it}, \alpha_i \right), i = 1, \ldots, N; t = 1, \ldots, T \right\rbrace$ as follows.
%
\begin{equation*}
y_{it} = x_{it}^{\prime} \beta + \alpha_i + \varepsilon_{it},  
\end{equation*}
%
where the subscript $i$ represents an individual observed at time $t$, $\alpha_i$'s are the unobserved individual-specific effects (time-invariant characteristics), $\beta$ is a $K \times 1$ vector of coefficients and an element of the parameter space $\Theta$, $y_{it}$ and $x_{it}$'s are the response variable and the $K$-dimensional vector of explanatory variables, respectively and $\varepsilon_{it}$'s are the independent and identically distributed (iid) error terms with $E \left( \varepsilon_{it} \vert x_{i1}, \ldots, x_{iT}, \alpha_i \right) = 0$, $E \left( \varepsilon_{it}^2 \vert x_{i1}, \ldots, x_{iT}, \alpha_i \right) = \sigma_{\varepsilon}^2$ and $E \left( \varepsilon_{it} \varepsilon_{is} \vert x_{i1}, \ldots, x_{iT}, \alpha_i \right) = 0$ for $t \neq s$. The above panel data regression model can be represented in matrix form as follows.
\begin{equation*} 
y = \alpha \otimes e_T + X \beta + \varepsilon,
\end{equation*}
where $y = \left( y_{1}, \ldots, y_{N} \right)^{\prime}$ is an $NT \times 1$ vector obtained by stacking observations $y_i = \left( y_{i1}, \ldots, y_{iT} \right)\T$ for individual $i=1,\ldots,N$, $X = \left( x_1, \ldots, x_N \right)^{\prime}$ is an $NT \times K$ matrix of regressors with $x_i = \left( x_{i1}^{\prime}, \ldots, x_{iT}^{\prime} \right)^{\prime}$, $\alpha$ is an $N \times 1$ vector consisting of the individual effects $\alpha_i$ for $i = 1 \ldots N$, $e_T$ is a $T \times 1$ vector of ones and $\otimes$ denotes the kronecker product. 

In fixed effects models, only variation within each cross-sectional unit is exploited (cf.~ \cite{Bala2014}, \cite{Kennedy2003} and \cite{Zhang2010}). Thus, in the presence of small or no within variation, the coefficients of the regressors in fixed effects models cannot be correctly estimated or identified, as noted in \cite{Cameron2009}.  
The fixed effects models generally allow for possible correlations between individual-specific unobservable effects and independent variables by including dummy variables for different intercepts, allowing a limited form of endogeneity (cf. \cite{Cameron2009}) while yielding unbiased estimates of the regression parameters (cf. \cite{Greene2003}, \cite{Kennedy2003}, \cite{Zhang2010}, and \cite{Lindeboom2002}).  
The information on both within and between variations are included by the random effects models. In random effects models, the individual-specific effects are being included in the model as a part of the disturbance, and these are required to be uncorrelated with the explanatory variables and the error terms (cf. \cite{Greene2003} and \cite{Maddala1973}). In particular, if $\alpha_i$ is assumed to be random then, the random effects model can be formulated as follows.  %

\begin{equation} \label{Eq:3}
y_{it} = x_{it}^{\prime} \beta + \alpha_i + \varepsilon_{it} = x_{it}^{\prime} \beta + \nu_{it},~~\alpha_i \sim \mathrm{iid}(0, \sigma_{\alpha}^2),~~\varepsilon_{it} \sim \mathrm{iid}(0, \sigma_{\varepsilon}^2),
\end{equation}
where $\nu_{it} = \alpha_i + \varepsilon_{it}$ denotes a compound error term with $\sigma_{\nu}^2 = \sigma_{\alpha}^2 + \sigma_{\varepsilon}^2$ and $cov\left( \nu_{it},\nu_{is}\right) = \sigma_{\alpha}^2$ for $t \neq s$. $\alpha_i$'s are assumed to be uncorrelated with $\varepsilon_{it}$ and $x_{it}$.

The pooled regression model
\begin{equation*} 
y_{it} = \alpha + x_{it}^{\prime} \beta  + \varepsilon_{it},  
\end{equation*}
 is a restricted type of panel data model such that the regression coefficients, i.e., $\alpha$ and $\beta$, have the common values to all cross-sectional units for all time periods as noted in \cite{Zhang2010} and \cite{Hsiao2003}. 

Finally, before we describe the between regression models, let $\bar{y}_i = T^{-1} \sum_{t = 1}^T y_{it}$, $\bar{x}_i = T^{-1} \sum_{t = 1}^T x_{it}$ and $\bar{\varepsilon}_i = T^{-1} \sum_{t = 1}^T \varepsilon_{it}$, respectively, denote the time averages of $y_{it}$, $x_{it}$ and $\varepsilon_{it}$ for the $i$-th cross-sectional unit. By considering the $N$ linear regression models based on the time averages of each cross-sectional unit, the between model is defined as follows:
\begin{equation} \label{Eq:10}
\bar{y}_i = \alpha_{i} + \bar{x}_i \beta + \bar{\varepsilon}_i.
\end{equation}
The between regressions are frequently used to investigate the long-run relationships by ignoring all the information owing to intra-subject variability (cf. \cite{Cameron2009}, \cite{Baltagi1984}  and \cite{Croissant2008}). For example, \cite{Houthakker1965} has examined the elasticity of demand for some countries and compared the estimates obtained using within and between regressions. The results obtained from the between country model can be interpreted as long run effects whereas the short run effects are captured by the within country regression model. Additionally, \cite{Baltagi1984} have reported the results of elasticity estimates for short run price and long run price, and compared the estimates in terms of mean, standard deviation and root mean square error (RMSE) criteria. It has been emphasized that the between estimator has a better performance according to the RMSE criterion for estimates of long run elasticity price than that of short run price elasticity.  

Next, we briefly discuss the commonly used estimation procedures for above mentioned linear panel data models.
\subsection{Traditional Estimation Methods} \label{Sec:2.1}

The estimation procedures commonly applied in linear panel data models discussed above, can be examined within the scope of OLS estimation as noted in \cite{Croissant2008}. The OLS based estimation techniques mainly rely on the type of variations (cf. \cite{Kennedy2003}). 

The pooled OLS estimator is simply the implementation of the OLS method to the linear model on the pooled data across two dimensions by completely disregarding the panel structure of the data. Therefore, $\widehat{\beta}_{pols}$, the pooled OLS estimator of $\beta$, can be obtained as follows.
\begin{equation*} 
\widehat{\beta}_{ pols} = \left( \sum_{i=1}^N \sum_{t=1}^T x_{it}^{\prime} x_{it} \right)^{-1} \left( \sum_{i=1}^N \sum_{t=1}^T x_{it}^{\prime} y_{it} \right).
\end{equation*}

As noted in \cite{Greene2003}, \cite{Wooldridge2002} and \cite{Cameron2009}, the pooled OLS method provides consistent estimates of the parameters for random effects and pooled regression models under the independence assumption of explanatory variables and error terms. On the other hand, it is severely biased and inconsistent for the fixed effects model due to the inclusion of the individual-specific effects, which are correlated with the explanatory variables. Also,  while investigating the bias and efficiency of some well-known panel data estimators in observational health studies, \cite{Dieleman2014} raised two main concerns for pooled estimator, namely, the heteroscedastic error terms and the bias caused by the omitted individual-specific effects. 

The estimation of the fixed effects model requires the time-demeaned  data. The fixed effects transformed model for the mean-centered data is obtained as follows.
\begin{equation*} 
\ddot{y}_{it} = \ddot{x}_{it}^{\prime} \beta + \ddot{\varepsilon}_{it},
\end{equation*}
where $\ddot{y}_{it} = y_{it} - \bar{y}_i$, $\ddot{x}_{it} = x_{it} - \bar{x}_i$ and $\ddot{\varepsilon}_{it} = \varepsilon_{it} - \bar{\varepsilon}_i$, so that the individual-specific effects have been eliminated. 
Under the assumptions of fixed effects model, $\widehat{\beta}_{fe}$, the fixed effects estimator of $\beta$, can be obtained as
\begin{equation*} 
\widehat{\beta}_{fe} = \left( \sum_{i=1}^N \sum_{t=1}^T \ddot{x}_{it}^{\prime} \ddot{x}_{it} \right)^{-1} \left( \sum_{i=1}^N \sum_{t=1}^T \ddot{x}_{it}^{\prime} \ddot{y}_{it} \right).
\end{equation*}
The fixed effects estimator (also known as the within estimator) is consistent for the fixed effects model when time dimension, $T$ gets large (cf. \cite{Baltagi2005}). Further, as noted in \cite{Allison2009}, the precisions of the fixed effects estimates are significantly affected when the independent variables vary greatly across individual units and simultaneously exhibit small variation over time for each individual. 

The fixed effects least squares method has several shortcomings. Firstly, the fixed effects estimators suffer from the incidental parameter problem (see, \citep{Neyman1948} for more details). If the cross-sectional dimension, $N$, is significantly large, the fixed effects estimators of individual-specific effects become biased and inconsistent because of the increasing number of these parameters (cf. \cite{Baltagi2005}, \cite{Greene2003} and \cite{Lancaster2000}). Furthermore, the fixed effects method is incapable of estimating the coefficients of the time-invariant variables since it only considers the variation within cross-sections. Therefore, the explanatory power of the model decreases with less efficient estimates (see, \cite{Allison2009} and \cite{Cameron2009} for more details). Another possible drawback is that several dummies used for time-invariant variables such as gender, race, geographic location, education, religion cause to aggravate collinearity among the regressors as noted in \cite{Baltagi2005}. Furthermore, \cite{Griliches1986} emphasize that the fixed effects methods underestimate the model parameters and result in drastically biased inference since the measurement errors get magnified in within dimension. The random effects model compensates for some of these problems encountered in fixed effects least squares.

The GLS method is used for estimating random effect model to deal with the autocorrelation in the error terms caused by the individual-specific effects. This method entails the quasi demeaning transformation of the variables to obtain the homoscedastic variance-covariance matrix for achieving efficiency, as noted in \cite{Croissant2008} and \cite{Jirata2014}. As emphasized in \cite{Baltagi2005}, the GLS method  asymptotically provides the best linear unbiased estimator if the variance-covariance matrix of the disturbance term is known. Also, \cite{Croissant2008} points out that it produces the equivalent estimates of $\beta$ to the OLS method on the quasi-demeaned data. The quasi-demeaning transformation contains subtracting the time-averages, weighted by using the variances of the idiosyncratic errors and individual effects, from the original variables. The transformed version of the random effects model is expressed as,
\begin{equation*} 
\tilde{y}_{it} = \tilde{x}_{it}^{\prime} \beta + \tilde{\nu}_{it},
\end{equation*}
where $\theta = 1 - \left[ \frac{\sigma_{\varepsilon}^2}{\sigma_{\varepsilon}^2 + T \sigma_{\alpha}^2} \right]^{1/2}$, $\tilde{\nu}_{it} = \nu_{it} - \theta \bar{\nu}_i$, $\tilde{y}_{it} = y_{it} - \theta \bar{y}_i$ and  $\tilde{x}_{it} = x_{it} - \theta \bar{x}_i$ with the time averages $\bar{y}_i$ and $\bar{x}_i$. By running OLS method on the transformed model, the GLS estimator (also called as random effects estimator), $\widehat{\beta}_{re}$ can be obtained as follows.
\begin{equation*} 
\widehat{\beta}_{re} = \left( \sum_{i=1}^N \sum_{t=1}^T \tilde{x}_{it}^{\prime} \tilde{x}_{it} \right)^{-1} \left( \sum_{i=1}^N \sum_{t=1}^T \tilde{x}_{it}^{\prime} \tilde{y}_{it} \right).
\end{equation*}
Note that, the fixed effects and pooled OLS estimators can be obtained by employing OLS method on the transformed model for $\theta = 1$ with $\sigma_{\varepsilon}^2 = 0$ and $\theta = 0$ with $\sigma_{\alpha}^2 = 0$, respectively, as the special cases of the above mentioned GLS estimator. 

When the random effects model as in Eq. \ref{Eq:3} is appropriate, both fixed effects and random effects estimators are consistent but random effects method provides more efficient estimates, with high explanatory power, compared to the fixed effects method. This is due to fact that the random effects estimators have the advantages of using both within and between variations, hence, these can be viewed as a weighted average of the between and fixed effects estimators (cf. \cite{Cameron2009} and \cite{Kennedy2003}). However, there is a trade-off between bias and efficiency, and the random effects method is more vulnerable to omitted variable bias than the fixed effects method, as noted in \cite{Allison2009}. In case of no omitted variables, the random effects model is generally preferred over the fixed effects model because it allows for estimating the effects of time-invariant variables; see \cite{Kennedy2003}.

The between regression models (Eq. \ref{Eq:10}) include the information reflected in the differences between cross-sections. A large between variation generally indicates the differences in means of the variables  over time for each subject as noted in \cite{Houthakker1965}. By employing the OLS method on the between regression model, the between estimator, $\widehat{\beta}_{be}$ is obtained as follows. 
\begin{equation*} 
\widehat{\beta}_{be} = \left( \bar{x}_i^{\prime} \bar{x}_i \right)^{-1} \left( \bar{x}_i^{\prime} \bar{y}_i \right).
\end{equation*}
Although the between estimator generates consistent results for the pooled and random effects models, it is rarely preferred in practice since the pooled and random effects estimators yield more efficient results compared to the between estimator, see \cite{Cameron2009}. Further, as noted in \cite{Zhang2010}, it can be used to estimate the effects of time-invariant variables in fixed effects model, but with biased estimates of the effects of both time-invariant and time-variant variables.

In spite of the fact that all of the traditional methods discussed above suffer heavily due to the presence of outlying observations, the existing literature on the robust methods to estimate static panel data models is fairly limited. Recently, a few different approaches within the robust estimation framework for the fixed effects panel data models have been developed by utilizing the generalized M-estimation  and least trimmed squares (LTS) techniques; see, for example, \cite{Bramati2007}, and \cite{Aquaro2013}. \cite{Bramati2007} defined the robust versions of fixed effects estimator with high breakdown point by extending some known robust regression estimators, such as LTS estimator of \cite{Rousseeuw1984} and a combination of M and S estimates of \cite{Maronna2000}. Another robust estimation approach has been proposed in \cite{Aquaro2013} based on two different data transformations by employing the efficient weighted least squares estimator of \cite{Gervini2002} and the reweighted LTS estimator of \cite{Cizek2010} in the context of linear regression model.

Next, in Section \ref{Sec:2.2}, we propose robust alternatives of the OLS based estimation procedures, which are highly sensitive to the presence of outliers, erroneous observations and any departure from the distributional assumptions on the error terms. Our approach is primarily based on the weighted likelihood estimating equations methodology introduced in \cite{Markatou1997}. The main idea behind the weighted likelihood methodology is to replace the maximum likelihood (ML) equations with weighted score equations, in which the weights come from minimum disparity estimation as in \cite{Lindsay1994}, for obtaining efficient estimates and reducing the effects of outliers on the score equations. Note that, the ML method (and its weighted version) provides a flexible framework for the purposes of likelihood based model specification testing and estimation in the presence of endogeneity problem leading to correlation between regressors and error terms. Also, it eliminates the incidental parameters problem over time, see \cite{Bai2014}. Furthermore, the ML estimator is equivalent to GLS estimator under the assumptions of homoscedasticity, no-autocorrelation and normally distributed error terms (cf.~\cite{Aitken1935}). 

\subsection{New Robust Weighted Likelihood based Estimation Procedure} \label{Sec:2.2}

As defined earlier, let $\lbrace y_1, \ldots, y_N \rbrace^{\prime}$ be an iid sample of $NT \times 1$ vector, and $\lbrace x_1, \ldots, x_N \rbrace^{\prime}$ denote an $NT \times K$ matrix of predictors with $x_i = \lbrace x_{i1}^{\prime}, \ldots, x_{iT}^{\prime} \rbrace^{\prime}$. Let us consider the random effects model given in Eq.~(\ref{Eq:3}) with density function $f = f( y_{it}; x_{it}, \beta)$ and define the joint probability density for disturbance terms, $\varepsilon_i + \alpha_i e_T  = y_i - x_i \beta $, as given below.
\begin{eqnarray*} 
f \left( \varepsilon_i + \alpha_i e_T \right) = \left( 2\pi \right)^{-\frac{T}{2}} \vert \mathbf{\Omega} \vert ^{-\frac{1}{2}} \exp{ \left\lbrace -\frac{1}{2} \left( y_i - x_i \beta \right)^{\prime} \mathbf{\Omega}^{-1} \left( y_i - x_i \beta \right) \right\rbrace}.
\end{eqnarray*}
Then, under the assumption of normally distributed $\nu_{it}$ and $\alpha_i$ terms, the log likelihood function utilizing the log likelihood contribution for cross-sectional unit $i$, $\mathcal{L}_i \left( \beta \right) = \log{f \left( y_{i}; x_{i}, \beta \right)}$,  can be expressed as
\begin{eqnarray*} 
\log{\mathcal{L} \left( \beta, \sigma_{\varepsilon}^2, \sigma_{\alpha}^2 \right)} &=& \sum_{i=1}^N \mathcal{L}_i \left( \beta \right) = \sum_{i=1}^N \left[ \sum_{t=1}^T \log{f \left( y_{it}; x_{it}, \beta \right)} \right],\nonumber  \\
&=& -\frac{NT}{2} \log{(2\pi) } -\frac{N}{2} \log{\vert \mathbf{\Omega} \vert} -\frac{1}{2} \sum_{i=1}^N \left( y_i - x_i \beta \right)^{\prime} \mathbf{\Omega}^{-1}  \left( y_i - x_i \beta \right),\nonumber\\ 
\end{eqnarray*}
where $ \mathbf{\Omega} = E \left( \nu_i \nu_i^{\prime} \right) = \sigma_{\varepsilon}^2 \mathbf{I}_T + \sigma_{\alpha}^2 e_T e_T^{\prime} $ denote a $T \times T$ matrix, with $\mathbf{I}_T$ being an identity matrix of dimension $T$. Let $V$ denote the full $NT \times NT$ variance-covariance matrix of compound error terms $\nu_{it}$, i.e., 
 \[ V =
  \begin{pmatrix}
    \mathbf{\Omega} & 0 & \dots & 0 \\
    0 & \mathbf{\Omega} & \dots & 0 \\
    \vdots & \vdots & \ddots & \vdots \\
    0 & 0 & \dots & \mathbf{\Omega}
  \end{pmatrix}
 = \mathbf{I}_N \otimes \mathbf{\Omega}. \]
The ML estimator of the unknown parameter vector, $\widehat{\beta}$ is obtained by solving the score functions
\begin{equation*} 
\underset{\beta \in \mathbb{R}^{K}}{\mathrm{arg~max}}\prod_{i=1}^N f \left( y_{i}; x_{i}, \beta \right) =  \underset{\beta \in \mathbb{R}^{K}}{\mathrm{arg~min}} \sum_{i=1}^N r_i^2 \left( \beta \right) = \underset{\beta \in \mathbb{R}^{K}}{\mathrm{arg~min}} \sum_{i=1}^N \left( y_i - x_i \beta \right)^{\prime} \mathbf{\Omega}^{-1}  \left( y_i - x_i \beta \right), \nonumber \\
\end{equation*}
where $r \left( \beta \right) = y_{it} - x_{it}^{\prime} \beta$ denote the error terms. 

In order to construct asymptotically consistent, weighted versions of the estimation equations,  we next introduce some definitions and notations on weighted likelihood methodology.

Let $M_{\beta} = \lbrace m_{\beta} \left( \cdot; \sigma_{\nu} \right); \sigma_{\nu} \in \mathbb{R}^+ \rbrace$ denote a parametric family of distributions for the theoretical error terms $r_i(\beta)$. We define $f^*(\cdot)$, the kernel density estimator based on the empirical distribution $\widehat{F}_N$ of the observed values of the residuals $r_i(\widehat{\beta})$, $i = 1, \cdots, N$, and $m_{\beta}^*(\cdot;\cdot)$, the smoothed model density, for $r_i(\widehat{\beta})$ as follows.
\begin{eqnarray} \nonumber
\label{f-m}
f^* \left( r_i \left( \widehat{\beta} \right) \right) &=& \int k \left( r_i \left( \widehat{\beta} \right);t, h \right) d \widehat{F}_N\left( t \right),\ \mbox{and}\\
m_{\beta}^* \left( r_i \left( \widehat{\beta} \right); \widehat{\sigma}_{\nu}\right) &=& \int k \left( r_i \left( \widehat{\beta} \right); t, h \right) d M_{\beta}\left(t; \widehat{\sigma}_{\nu} \right)\nonumber,
\end{eqnarray}
where $M_{\beta} \left(\cdot ; \sigma_{\nu} \right)$ is the distribution function for density $m_{\beta}(\cdot; \sigma_{\nu})$ and $k \left( r; t, h \right)$ is a kernel density with bandwidth $h$. In this study, the normal kernel density with variance $h^2$, $ k \left( r; t, h \right) = \frac{exp \left( - \left( r-t \right)^2 / 2h^2 \right)}{\sqrt{2\pi}h}$, is used. Note that the bandwidth parameter $h$ is chosen as $h=c  \sigma_{\nu}$ where $c$ is a constant term independent of the scale of the model so that outlying points will receive very small weights (cf. \cite{Markatou1998}). For the normal model, choosing the smoothing parameter based on the parameter $c$ in determining the level of downweighting ensures that the weighted likelihood estimating equations become location and scale equivariant as noted in \cite{Markatou1998}. We then define the Pearson residuals as follows.

\begin{equation*} 
\delta \left(  r_i \left( \widehat{\beta} \right) \right) = \frac{f^* \left( r_i \left( \widehat{\beta} \right) \right)}{m_{\beta}^* \left( r_i \left( \widehat{\beta} \right); \widehat{\sigma}_{\nu}\right)}-1
\end{equation*}
Based on the above, the weighted likelihood estimators of $\beta$ and $\sigma_{\nu}$ are obtained by solving the following estimating equations.
\begin{eqnarray}
\sum_{i=1}^{N} \omega\left( r_i \left( \widehat{\beta} \right); M_{\beta}, \widehat{F}_N \right) s \left( r_i\left( \beta \right); \sigma_{\nu} \right) = 0, \label{eq:eqbeta} 
\\
\sum_{i=1}^{N} \omega\left( r_i \left( \widehat{\beta} \right); M_{\beta}, \widehat{F}_N \right) s_{\sigma_{\nu}}\left( r_i\left( \beta \right); \sigma_{\nu} \right) = 0, \label{eq:eqsigma}
\end{eqnarray}
where 
\begin{eqnarray*}
s \left( r_i \left( \beta \right); \sigma_{\nu} \right) &=& \frac{\partial}{\partial \beta} \log{f \left( y_{it}; x_{it}, \beta, \sigma_{\nu} \right)} = \frac{\partial}{\partial \beta} \log{m_{\beta}(r_i\left( \beta \right); \sigma_{\nu})},\ \mbox{and} \\
s_{\sigma_{\nu}} \left( r_i \left( \beta \right); \sigma_{\nu} \right) &=& \frac{\partial}{\partial \sigma_{\nu}} \log{f \left( y_{it}; x_{it}, \beta, \sigma_{\nu} \right)} = \frac{\partial}{\partial \sigma_{\nu}} \log{m_{\beta}(r_i\left( \beta \right); \sigma_{\nu})}
\end{eqnarray*}
are the usual score functions and
\begin{equation*} 
\omega\left( r_i \left( \widehat{\beta} \right); M_{\beta}, \widehat{F}_N \right) = \omega_{i} = \min{ \left\lbrace 1, \frac{\left[ A\left( \delta \left(  r_i \left( \widehat{\beta} \right) \right)\right) + 1 \right]^+}{\delta \left(  r_i \left( \widehat{\beta} \right) \right)+ 1}\right\rbrace} 
\end{equation*}
where $ \left[~ . ~\right]^+ $ and $ A\left( . \right) $ denote the positive part of a function and the Residual Adjustment Function (RAF) as described in \cite{Lindsay1994} (e.g., Hellinger RAF $A(\delta) = 2 \left[ \left( \delta + 1 \right)^{1/2} - 1 \right]$), respectively. When $A\left( \delta \left(  r_i \left( \widehat{\beta} \right) \right)\right) = \delta \left(  r_i \left( \widehat{\beta} \right) \right)$, the weights $\omega\left( r_i \left( \widehat{\beta} \right); M_{\beta}, \widehat{F}_N \right) = 1$, and this leads to produce maximum likelihood estimates of the parameters (cf. \cite{Agostinelli2002b} and \cite{Agostinelli2001}).

In weighted likelihood methodology, the usual score equations based on maximum likelihood model are replaced by the weighted score equations to estimate model parameters. The weighted score equations defined above use the weights expressed as a function of Pearson residuals, $\delta \left(  r_i \left( \widehat{\beta} \right) \right) = \frac{f^* \left( r_i \left( \widehat{\beta} \right) \right)}{m_{\beta}^* \left( r_i \left( \widehat{\beta} \right); \widehat{\sigma}_{\nu}\right)}-1$. The weight function reflects the discordance between assumed model density and an estimate of true model density as noted in \cite{Markatou1996}. If the model is correctly specified in the absence outlying observations, then  $\delta$ converges with probability $1$ to $0$  and thus, the weight function assigns a value close to $1$. However, if the data involve outlying observations, large Pearson residuals are produced and the weight function assigns small weights to the outlying points depending on the level of discordance between the kernel density estimate of the model $f^* \left( \cdot \right)$ and the smoothed model density $m_{\beta}^* \left( \cdot; \cdot \right)$. Thus, the proposed estimators obtained using weighted likelihood estimating equations defined in \ref{eq:eqbeta} and \ref{eq:eqsigma} will be robust in presence of outliers and/or contamination in the data due to use of weighted residuals. 

An algorithm using resampling techniques have been proposed by \cite{Markatou1998} to find the roots of the weighted likelihood estimating equations. They suggest to use of data-driven starting values to create a reasonable search region that includes all reasonable solutions having high probability in parameter space. To this end, the sub-samples with fixed dimension, which are sufficiently large for obtaining the ML estimates of parameters $\beta$, are drawn without replacement from the data. Then, the ML estimates of $\beta$, $\widehat{\beta}_b^*$ for $b = 1, \cdots, B$ are obtained for each bootstrap sample. Finally, each of these estimates is used as an initial value in the iterative re-weighting algorithm for obtaining the roots of weighted likelihood estimating equations. $B = 30$ bootstrap sub-samples are created, and the maximum number of iterations are determined as $500$ in our simulation studies. (as in the default values of \texttt{R} package \texttt{wle})

The ML method can be considered as a minimum distance (minimum disparity) method and growing attention has been paid to construct a parallel method of estimation which has the similar or same efficiency properties with the ML method until the late 1970s. \cite{Beran1977} has focused the robustness properties of density based minimum distance estimation methods and demonstrated asymptotic first order efficiency of the estimator which minimizes the Hellinger distance between a kernel density estimator and a density from the model family within the continuous parametric models framework. The robustness of our proposed estimators is based on using the parallel minimum disparity measure in obtaining weight function for which downweight the outlying observations in the data. One of the main advantages related to the robustness properties in the minimum disparity estimation is that the presence of the valid objective function allows to investigate the breakdown point of the estimates as a measure of the robust global property. The breakdown properties of the estimators based on weighted likelihood estimating equations are examined by \cite{Markatou1996} and \cite{Markatou1998} using the stability property of the estimating equations. The root selection method plays a very crucial role in determining the theoretical breakdown properties of the estimators when an estimating equation has multiple roots (cf. \cite{Markatou1998}). To achieve the robust global property, a root is chosen based on using minimum parallel disparity measure defined as follows
\begin{equation} \nonumber
\rho_G \left( f^*, m_{\beta}^* \right) = \int G \left( \delta \left( x \right) \right) m_{\beta}^* \left( x \right) dx
\end{equation}
where $G$ is a thrice differentiable convex function defined on $\left[ -1, \infty \right)$ with $G \left( 0 \right) = 0$. \cite{Lindsay1994} has indicated that the choice of RAF may have a great impact on the robustness and efficiency of the corresponding estimators in the class of minimum disparity type methods. The function $G \left( \delta \right) = 2 \left( \left( \delta + 1 \right)^{1/2}-1 \right)^2$ is the squared Hellinger distance in our proposed approach. Under differentiability and regularity conditions, $\widehat{\beta}_i$ for $i = 1, 2, \cdots, \ell$ is obtained as a root of the minimum disparity estimating equation
\begin{equation} \nonumber
\int A \left( \delta (x) \right) \nabla m_{\beta}^* \left( x \right) dx = 0
\end{equation}
where $\nabla$ denote the gradient with respect to $\beta$, $A \left( \delta \right) = G^{\prime} \left( \delta \right) (1 + \delta) - G \left( \delta \right) $, $G^{\prime}$ representing the derivative of $G$, and $\delta (x)+ 1 = f^* \left( x \right) / m_{\beta}^* \left( x \right)$.
The parallel disparity measures obtained for each $\widehat{\beta}_i$ where $i = 1, 2, \cdots, \ell$, $\rho_G \left( f^*, m_{\widehat{\beta_1}}^* \left( x \right) \right), \rho_G \left( f^*, m_{\widehat{\beta_2}}^*\left( x \right) \right), \cdots, \rho_G \left( f^*, m_{\widehat{\beta_{\ell}}}^*\left( x \right)\right)$ are examined. Then, the proposed estimators based on weighted likelihood estimating equations can achieve the highest asymptotic breakdown point of $1/2$ by selecting a root providing the minimum value of disparity measure as shown below.
\begin{equation} \nonumber
\widehat{\beta}_{\omega} = \underset{\widehat{\beta}_i, i=1, 2, \cdots, l}{\mathrm{arg~min}} \rho_G \left( f^*, m_{\widehat{\beta}_i}^* \left( x \right) \right)
\end{equation}
Let $\beta_0$, $\widehat{\beta} =  \underset{\beta \in \mathbb{R}^{K}}{\mathrm{arg~min}} \sum_{i=1}^N r_i^2 \left( \beta \right)$ and $\widehat{\beta}_{\omega} = \underset{\beta \in \mathbb{R}^{K}}{\mathrm{arg~min}} \sum_{i=1}^N \omega_i r_i^2 \left( \beta \right)$ denote the true value of the parameters, the ML (or GLS) and weighted likelihood estimators of the parameters, respectively. For the linear panel data model with random effects, the conditions required for the existence of solutions and asymptotic normality of the proposed estimators are as follows (cf. \cite{Agostinelli1998} and \cite{Markatou1998}):
\begin{itemize}
\item[A1.] The weight function $\omega\left( \delta \right)$ is a nonnegative, bounded and differentiable function with respect to $\delta$. 
\item[A2.] The weight function $\omega\left( \delta \right)$ is regular with bounded $\omega^{\prime} \left( \delta \right) \left( \delta + 1 \right)$, where prime denotes the derivative. 
\end{itemize}

Let $\tilde{s}\left(x; \beta\right) = \nabla m_{\beta}^* \left( x \right) / m_{\beta}^* \left( x \right)$ and $s\left(x; \beta\right) = \nabla m_{\beta} \left( x \right) / m_{\beta} \left( x \right)$ where $m_{\beta}^* \left( x \right)$ and $m_{\beta} \left( x \right)$ denote the smoothed model and true model, respectively.
\begin{itemize}
\item[A3.] For every $\beta_0 \in \Theta$, there is a neighborhood $N \left( \beta_0 \right)$ such that for $\beta \in N \left( \beta_0 \right)$,  $M_i \left( x \right)$ for $i=1,2,3,4$, where $E_{\beta_0} \left[ M_i \left( X \right) \right] < \infty$, are the bounds for the quantities $\vert \tilde{s}\left(x; \beta\right) s^{\prime}\left(x; \beta\right) \vert$, $\vert \tilde{s}^2\left(x; \beta\right) s\left(x; \beta\right) \vert$, $\vert \tilde{s}^{\prime}\left(x; \beta\right) s\left(x; \beta\right) \vert$ and $\vert s^{\prime \prime}\left(x; \beta\right) \vert$.
\item[A4.] $E_{\beta_0} \left[ \tilde{s}^2\left(x; \beta\right) s^2\left(x; \beta\right) \right] < \infty$.
\item[A5.] The Fisher information is finite; $I \left( \beta \right) = E_{\beta} \left[ s^2\left(x; \beta\right) \right] < \infty$.
\item[A6.] \begin{itemize}
\item[i.] $\int \vert \nabla m_{\beta} \left( x \right) / m_{\beta}^* \left( x \right)\vert dx = \int \vert  m_{\beta} \left( x \right) s\left(x; \beta\right) / m_{\beta}^* \left( x \right)\vert dx < \infty$.
\item[ii.] $\int \vert \tilde{s}\left(x; \beta\right) s\left(x; \beta\right) \vert \left[ \frac{m_{\beta} \left( x \right)}{m_{\beta}^* \left( x \right)} \right] dx < \infty$.
\item[iii.] $\int \vert s^{\prime}\left(x; \beta\right)\vert \left[ \frac{m_{\beta} \left( x \right)}{m_{\beta}^* \left( x \right)} \right] dx < \infty$.
\end{itemize}
\item[A7.] The kernel density function $k \left( X; t, h \right)$ is bounded for all $x$ by a finite constant $M(h)$ that may depend on the smoothing parameter $h$ but not on $t$ or $x$.
\end{itemize}
Also, we present the following lemma from \cite{Markatou1995} needed for completeness.
\begin{lemma}
\label{lm:1}
\begin{eqnarray} \nonumber
\sqrt{N} \left| \frac{1}{N} \sum_{i=1}^N \omega \left( \delta \left( r_i \right) \right) s \left( x_i, y_i; \beta_0 \right) - \frac{1}{N} \sum_{i=1}^N s \left( x_i, y_i; \beta_0 \right) \right| \xrightarrow{p} 0 ~~as~~ N \rightarrow \infty \\ 
\left| \frac{1}{N} \sum_{i=1}^N \nabla_{\beta} \left\lbrace \omega \left( \delta \left( r_i \right) \right) s \left( x_i, y_i; \beta \right) \right\rbrace \vert_{\beta=\beta_0} - \frac{1}{N} \sum_{i=1}^N \frac{\partial}{\partial \beta} s \left( x_i, y_i; \beta_0 \right)  \right| \xrightarrow{p} 0 ~~as~~ N \rightarrow \infty \nonumber
\end{eqnarray}
and 
\begin{equation}
\frac{1}{N} \sum_{i=1}^N \frac{\partial^2}{\partial \beta^2} \left( \omega \left( \delta \left( r_i \right) \right) s \left( x_i, y_i; \beta \right) \right) \vert_{\beta=\beta^*} = \mathcal{O}_p \left( 1 \right) \nonumber
\end{equation}
where $\beta^*$ is the initial value between true value $\beta_0$ and $\widehat{\beta}_{\omega}$.
\end{lemma}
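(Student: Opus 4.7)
The plan is to exploit the observation that under the correctly specified model (i.e.\ at $\beta_0$), the Pearson residuals $\delta(r_i)$ converge to zero in probability, so that the weights $\omega(\delta(r_i))$ converge to one and the weighted averages collapse onto their unweighted counterparts. First, for the top statement I would perform a first-order Taylor expansion of $\omega$ around $\delta=0$. By A1, $\omega$ is nonnegative, bounded, and differentiable, with $\omega(0)=1$, so
\begin{equation*}
\omega\bigl(\delta(r_i)\bigr)-1=\omega'(\xi_i)\,\delta(r_i),
\end{equation*}
for some $\xi_i$ between $0$ and $\delta(r_i)$. The target quantity rewrites as
\begin{equation*}
\sqrt{N}\cdot\frac{1}{N}\sum_{i=1}^{N}\bigl(\omega(\delta(r_i))-1\bigr)\,s(x_i,y_i;\beta_0)=\frac{1}{\sqrt{N}}\sum_{i=1}^{N}\omega'(\xi_i)(\delta(r_i)+1)\cdot\frac{\delta(r_i)}{\delta(r_i)+1}\,s(x_i,y_i;\beta_0).
\end{equation*}
By A2 the factor $\omega'(\xi_i)(\xi_i+1)$ is uniformly bounded, and since the smoothed kernel $k$ has bounded sup-norm (A7) with fixed bandwidth $h=c\sigma_\nu$, the quantity $f^{*}-m^{*}_{\beta_0}$ is a convolution of the empirical--theoretical deviation with a fixed kernel, so that $\sup_r|\delta(r)|\xrightarrow{p}0$ at the parametric rate $N^{-1/2}$. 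Combining this with the square-integrability of the score ensured by A5 and Cauchy--Schwarz delivers the claimed $o_p(1)$ bound.

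For the second statement I would differentiate the product $\omega(\delta(r_i))\,s(x_i,y_i;\beta)$ with respect to $\beta$ and evaluate at $\beta_0$. The extra term that appears relative to $\partial s/\partial\beta$ is
\begin{equation*}
\omega'(\delta(r_i))\,\nabla_{\beta}\delta(r_i)\,s(x_i,y_i;\beta_0),
\end{equation*}
with $\nabla_\beta\delta = -(\delta+1)\tilde s(\cdot;\beta)$ via the identity $\nabla_\beta m_\beta^{*}/m_\beta^{*}=\tilde s$. Conditions A3 and A6 guarantee integrability of the cross products $|\tilde s\,s'|$ and $|\tilde s\,s|\,m_\beta/m_\beta^{*}$, so each summand is dominated by an integrable envelope; a law of large numbers argument, combined once more with $\delta(r_i)\xrightarrow{p}0$, then drives the difference between the two averages to zero in probability. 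A remaining factor $(\omega(\delta)-1)\,\partial s/\partial\beta$ is handled as in the first part.

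For the third statement, applying the product rule twice produces terms containing factors $\omega''(\delta)(\delta+1)^2$, $\omega'(\delta)(\delta+1)$, and $\omega(\delta)$ --- all bounded by A1--A2 --- multiplied by $|\tilde s^{2}s|$, $|\tilde s'\,s|$, $|\tilde s\,s'|$, and $|s''|$, each of which is dominated on a neighborhood $N(\beta_0)$ containing $\beta^{*}$ by the respective envelopes $M_i$ from A3 with finite expectation. A standard uniform law of large numbers then yields $\mathcal O_p(1)$.

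The principal obstacle lies in step two: controlling $\delta(r_i)$ and its $\beta$-gradient uniformly. Because the same kernel is used in both $f^{*}$ and $m_\beta^{*}$, the deviation $\delta$ is effectively the difference between $\widehat F_N$ and its population counterpart, convolved with a fixed-scale kernel; this reduces the problem to controlling an empirical process at the parametric rate, which, in combination with A6--A7, is sufficient for all three conclusions. This is the mechanism implicitly used in \cite{Markatou1998} and suffices here.
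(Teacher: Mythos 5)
The paper does not actually prove this lemma: it is imported verbatim from Markatou, Basu and Lindsay (1995) ``needed for completeness,'' so there is no in-paper argument to compare yours against. Judged on its own terms, your sketch has the right architecture (weights collapse to one under the correctly specified model, envelopes from A1--A7 control the derivative terms), and your treatment of the second and third displays via the product rule, the identity $\nabla_\beta \delta = -(\delta+1)\tilde s$, and the envelopes $M_i$ of A3 is essentially the standard route.

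There is, however, a genuine gap in your argument for the first (and most delicate) display. The mean-value step $\omega(\delta(r_i)) - 1 = \omega'(\xi_i)\,\delta(r_i)$ combined only with \emph{boundedness} of $\omega'(\delta)(\delta+1)$ (A2) and $\sup_i|\delta(r_i)| = O_p(N^{-1/2})$ gives $\bigl|\tfrac{1}{\sqrt N}\sum_i(\omega_i-1)s_i\bigr| \le C\,O_p(N^{-1/2})\cdot\sqrt N\cdot\tfrac1N\sum_i|s_i| = O_p(1)$, which is exactly the wrong order: the lemma asserts $o_p(1)$, and in general $\tfrac{1}{\sqrt N}\sum_i \delta(r_i)s_i$ converges to a nondegenerate limit, so the linear term does not vanish for free. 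The mechanism that actually delivers the result is that $\omega(\delta)-1$ is of \emph{second} order in $\delta$: writing $\omega(\delta) = [A(\delta)+1]/(\delta+1)$ and using the RAF normalization $A(0)=0$, $A'(0)=1$ from Lindsay (1994), one gets $\omega'(0)=0$ and hence $\omega(\delta)-1 = O(\delta^2)$ near $\delta=0$, so the sum is $O_p(\sqrt N\cdot N^{-1}) = o_p(1)$. Your proof never invokes $\omega'(0)=0$ (or any second-order contact of $A$ with the identity), so as written it does not establish the first convergence. A secondary concern: the claim that $\sup_r|\delta(r)| = O_p(N^{-1/2})$ uniformly over all $r$ is too strong in the tails where $m_\beta^*$ is arbitrarily small; the cited arguments work with pointwise or integrated control of the Pearson residuals rather than a global sup at the parametric rate.
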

Under the assumption that the model is correctly specified and the conditions A1-A7 given above hold, we present the following theorem which shows the asymptotic equivalence of $\widehat{\beta}_{\omega}$ and $\widehat{\beta}$.

\begin{theorem}
\label{th:1}
\begin{equation}
\sqrt{N} \left( \widehat{\beta}_{\omega} - \widehat{\beta} \right) = o_p \left( 1 \right) ~~ as ~~ N \rightarrow \infty. \nonumber
\end{equation}
\end{theorem}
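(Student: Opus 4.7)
The plan is to expand both estimating equations around the true parameter $\beta_0$ and then subtract, using Lemma \ref{lm:1} as the engine that makes the weighted and unweighted equations asymptotically indistinguishable. Specifically, $\widehat{\beta}$ solves $\sum_i s(x_i, y_i; \beta) = 0$ while $\widehat{\beta}_{\omega}$ solves $\sum_i \omega(\delta(r_i)) s(x_i, y_i; \beta) = 0$. A second-order Taylor expansion of each equation around $\beta_0$ yields
\begin{equation*}
0 = \tfrac{1}{N}\sum_i s_i(\beta_0) + \bigl[\tfrac{1}{N}\sum_i \nabla_\beta s_i(\beta_0)\bigr](\widehat{\beta}-\beta_0) + R_N,
\end{equation*}
\begin{equation*}
0 = \tfrac{1}{N}\sum_i \omega_i s_i(\beta_0) + \bigl[\tfrac{1}{N}\sum_i \nabla_\beta(\omega_i s_i)(\beta_0)\bigr](\widehat{\beta}_\omega-\beta_0) + \widetilde{R}_N,
\end{equation*}
where $R_N$ and $\widetilde{R}_N$ are the quadratic remainders controlled by the third line of Lemma \ref{lm:1} (i.e., they are $\mathcal{O}_p(\|\widehat{\beta}-\beta_0\|^2)$ and $\mathcal{O}_p(\|\widehat{\beta}_\omega-\beta_0\|^2)$, hence $o_p(N^{-1/2})$ once consistency is in hand).

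Next I would use the first two displays of Lemma \ref{lm:1} to conclude that
\begin{equation*}
\sqrt{N}\bigl[\tfrac{1}{N}\sum_i \omega_i s_i(\beta_0) - \tfrac{1}{N}\sum_i s_i(\beta_0)\bigr] = o_p(1),
\end{equation*}
and that the two Hessian-like averages $\tfrac{1}{N}\sum_i \nabla_\beta(\omega_i s_i)(\beta_0)$ and $\tfrac{1}{N}\sum_i \nabla_\beta s_i(\beta_0)$ share the same probability limit, namely $-I(\beta_0)$, the Fisher information under conditions A4--A6. Inverting this common limit (nonsingular by A5), I would solve each Taylor expansion for $\sqrt{N}(\widehat{\beta}-\beta_0)$ and $\sqrt{N}(\widehat{\beta}_\omega-\beta_0)$ respectively, obtaining expressions of the form $I(\beta_0)^{-1}\sqrt{N}\,\overline{s}_N(\beta_0) + o_p(1)$ in both cases. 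Subtracting then gives $\sqrt{N}(\widehat{\beta}_\omega - \widehat{\beta}) = o_p(1)$.

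The main obstacle, and the step I would treat most carefully, is justifying that both estimators are $\sqrt{N}$-consistent and lie in a shrinking neighborhood of $\beta_0$ with probability tending to one, so that the Taylor expansion and its quadratic remainder bound are valid. This requires combining A1--A2 (boundedness and regularity of $\omega$), A3 (uniform domination of derivatives of the score by integrable envelopes on a neighborhood of $\beta_0$), and the root-selection rule based on minimum Hellinger disparity $\rho_G$ to guarantee that the sequence of weighted roots picked by the algorithm actually converges to $\beta_0$ rather than to a spurious root. Once consistency is established, the argument above is essentially mechanical: Lemma \ref{lm:1} handles the drift between the weighted and unweighted scores, A5--A6 handle the invertibility of the information matrix, and the quadratic remainder from the third clause of Lemma \ref{lm:1} absorbs into $o_p(1)$ after multiplication by $\sqrt{N}$.
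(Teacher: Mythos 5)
Your argument is correct and is essentially the same approach the paper takes: the paper's proof merely records the ingredients --- $\sup_{i}\vert\hat{\omega}_i-1\vert\xrightarrow{p}0$, Lemma \ref{lm:1}, and continuity of $\int \omega(\delta(r))\nabla_{\beta}s(x,y;\beta)\,dF(x,y)$ in $\beta$ --- and delegates the Taylor-expansion mechanics to Theorem 1 of Agostinelli (2002), which is precisely the computation you spell out. Your explicit attention to $\sqrt{N}$-consistency and the disparity-based root selection fills in a step the paper leaves implicit in that citation.
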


\begin{proof}
Using the fact that $\sup_{i} \vert \hat{\omega}_i - 1\vert \xrightarrow{p} 0$ (cf. \cite{Agostinelli2002b}) and Lemma \ref{lm:1} and observing that $\int \omega \left( \delta \left( r \right) \right) \left( \nabla_{\beta} s \left( x, y; \beta \right) \right) dF\left( x,y \right)$ is a continuous function in $\beta$, the proof of Theorem \ref{th:1} follows from Theorem 1 of \cite{Agostinelli2002b}.
\end{proof}

\section{Numerical Results} \label{Sec:3}
In this section, we present results from an extensive simulation study to assess the finite sample properties of the proposed and conventional estimators. The robustness performances of the proposed procedures are examined via three different scenarios; (i) different sample sizes, (ii) different error distributions, and (iii) various types of outliers. All calculations have been carried out using \texttt{R} 3.6.0. on an IntelCore i7 6700HQ 2.6 GHz PC. (The codes can be obtained from the author upon request.)

The following static linear panel data model is considered for the data generation processes (DGP).
\begin{equation*} \label{Eq:dgp}
y_{it} = x_{it}^{\prime} \beta + \alpha_i + \varepsilon_{it},~~i = 1, \ldots, N,~~t = 1, \ldots, T ,
\end{equation*} 
where $\alpha_i$ and $\varepsilon_{it}$'s are assumed to be iid $\text{N}(\mu = 0, \sigma^2 = 1)$. The vector of regression coefficients is chosen as $\beta^{\prime} = \left( \beta_1, \beta_2 \right) = \left( 2.4, -1.2 \right)$. For the random effects model specification (DGP-II), the explanatory variables $x_{itK}$ for $K = 1, 2$ are generated from a standard normal distribution. For the fixed effect model (DGP-I), these are generated depending on the individual effects as in \cite{Visek2015} as follows.
\begin{equation*}
x_{itK}^{fe} = x_{itK} + \alpha_i.
\end{equation*}
Throughout the experiments, $S = 1000$ simulations are performed to estimate the model coefficients and calculate the performance metrics. To evaluate the performance of the methods previously described, we calculate the mean squared errors (MSE):
$MSE = \frac{1}{S} \sum_{s=1}^S \Big\| \widehat{\beta}^s-\beta \Big\|^2$, where $\widehat{\beta}^s$, $s=1, \cdots, S$, denote the estimates obtained from $S$ simulated samples. The bandwidth of the kernel $k(\cdot)$ in Eq.~ \ref{f-m} is chosen using the \texttt{wle.smooth} function in the \texttt{R} package \texttt{wle}. 
\subsection{Sample sizes} \label{Sec:3.1}
Different values of cross-sectional dimension $N$ and time dimension $T$ are considered to investigate the effect of panel sizes on the performances of our proposed estimators. In particular, we consider $N = 25, 50, 100, 250$ for fixed time period $T = 4$, and $T = 3, 8, 12, 25$ for fixed cross-sectional dimension $N = 50$. We compare the MSE values of the estimators under standard normal errors, $\varepsilon_{it} \sim$ $\text{N}(\mu = 0, \sigma^2 = 1)$. The simulation results are presented in Table \ref{Tab1:N_increasing_T_increasing}. Our records indicate that, for both DGPs, the proposed weighted likelihood based estimators perform similarly with their traditional counterparts. These results also confirm that the proposed methods are consistent with the original least squares based estimators when $N$ and/or $T$ goes to infinity. 
\subsection{Error distributions} \label{Sec:3.2}
Three different error distributions, namely, $\text{N}(\mu = 0, \sigma^2 = 1)$, Student's t distribution with 5 degrees of freedom ($t_5$), and double exponential distribution with rate 1 ($DExp(1)$) are considered to evaluate the influence of error distributions on the estimation methods. The MSE values of the estimators are calculated for three pairs of values of cross-sectional sizes and time periods: $(N, T)=(250,5), (100, 10)$, and $(30, 20)$. Since our conclusions do not vary significantly with different choices of panel sizes, therefore to save space, we report only the results for  $N = 100$ and $T =10$. The results are reported in Table \ref{Tab2:error-dist-N-100-T-10} which indicate that the estimators have similar performances under different error distributions. 
\subsection{Outliers} \label{Sec:3.3}
In this section, the finite sample properties of the estimation procedures are investigated in the presence of different types of outliers. Throughout the simulations, the panel size is chosen to be $240$ with two levels of cross-sectional sizes and time periods, namely, $N_1 = 120$, $T_1 = 2$ and $N_2 = 80$, $T_2 = 3$. Two different levels of contamination (5\% and 10\%) are considered by setting the number of outliers as $m = 12$ and $m = 24$. The contaminated data is generated by two different ways as in \cite{Bramati2007} and \cite{Aquaro2013}: (i) outliers are randomly allocated over all observations, and (ii) half of the observations within individual units are contaminated such that outlying observations are concentrated in some blocks. Based on the above framework, the following contamination schemes are considered. 
\begin{itemize}
\item[1.] Random vertical outliers are generated by multiplying the randomly selected original values of the response variable by $-3$.
\item[2.] To generate the random leverage points, first the randomly selected values of the response variable are contaminated by multiplying with $-3$ and adding a $\text{N}(\mu = 20, \sigma^2 = 4)$ term. Then, the values of the explanatory variables corresponding to the contaminated values of the response variable are generated from a normal distribution $\text{N}(\mu = 5, \sigma^2 = 4)$. 
\item[3.] Concentrated vertical outliers are generated by multiplying the randomly selected blocks of the original values of response variable by $-3$ and adding a random value from $\text{N}(\mu = 50, \sigma^2 = 1)$. 
\item[4.] Concentrated leverage points are inserted into the randomly selected blocks of the original values of response and corresponding explanatory variables following the same rule as in the second scheme. 
\end{itemize}

Note that the proportion of contaminated values per cross-sectional unit constitute at least a half of observations over time periods for concentrated vertical outliers and concentrated leverage points. For the representative plots of the contamination schemes mentioned above, please see Figure 1 of \cite{Bramati2007}. The simulation results are given in Table \ref{Tab3:outliers-N-120-80-T-2-3-mse}. Our results clearly demonstrate that the proposed estimators outperform the conventional least-square based estimators in all situations. Note that, the performances of the conventional estimators can be severely degraded in the presence of outliers depending on the types and levels of contaminations. On the other hand, the performances of the proposed estimators are not sensitive to the choice of contamination level and/or scheme. We further see that, compared to the presence of vertical outliers, the traditional estimators produce more biased and less efficient results when the data is contaminated by the leverage points, in general. The MSE values calculated for traditional estimators significantly increase with increasing level of contamination in the presence of vertical outliers under DGP-II. It can further be seen that the proposed procedures ($\widehat{\beta}_{wpols}$ and $\widehat{\beta}_{wbe}$) produce significantly better results compared to $\widehat{\beta}_{pols}$ and $\widehat{\beta}_{be}$ estimators under all contamination schemes and contamination levels. The $\widehat{\beta}_{wre}$ and $\widehat{\beta}_{wfe}$ estimators exhibit improved performances over the conventional counterparts especially for DGP-II.  

To justify the superiority of the proposed methods further, we compare the power of significance tests of the regression coefficients. For the comparisons, the significance level $\gamma$ is set to $0.05$ to calculate the power of significance testing of individual coefficients which is defined as follows.
\begin{equation*}
P_K = \frac{1}{S} \sum_{s=1}^S \mathbf{1} \left\lbrace Q\left(\gamma/2 \right) \leq \frac{\widehat{\beta}_K^s}{\widehat{s}_K^s} \leq Q\left(1 - \gamma/2 \right) \right\rbrace,
\end{equation*}
where $\mathbf{1}(\cdot)$ and $Q(\cdot)$ denote the indicator function and the quantiles of standard normal distribution, respectively, and $\widehat{s}_K$ represents the standard error of the estimate for $K = 1, 2$. \\

Table \ref{Tab4:outliers-power} report the simulated powers for the individual coefficients $\beta_1 = 2.4$ and $\beta_2 = -1.2$, respectively. The results demonstrate that both proposed and traditional methods produce similar powers for $\beta_1$ when the data are contaminated by random vertical outliers. On the other hand, the proposed estimators have significantly better power values than those of classical methods for $\beta_2$. Additionally, while $\widehat{\beta}_{1,wfe}$ and $\widehat{\beta}_{1,wre}$ have similar power results when the data include concentrated vertical outliers, the proposed methods outperforms the unweighted procedures for $\beta_2$. The proposed methods are less affected by increasing number of outliers and yield a large gain in power in almost all cases. \\

\section{Case Study} \label{Sec:4}
In this section, we study the performances of the proposed and traditional OLS based estimators with a case study, the blood pressure data
set. The data set, which is consisted of a total of 2400 observations ($N = 200$, $T = 12$) with three variables: pulse rate, systolic pressure and diastolic pressure, are collected from male and female patients (during hospitalization) by ambulatory blood pressure monitors.  Let $x_{it1}$ = systolic pressure, $x_{it2}$ = diastolic pressure and $y_{it}$ = pulse rate, we conduct a linear panel data regression model: $ y_{it} = x_{it}^{\prime} \beta + \alpha_i + \varepsilon_{it} $ where $\beta^{\prime} = \left( \beta_1, \beta_2 \right)$, $i = 1, \ldots, 200$ and $t = 1, \ldots, 12$. The scatterplots of the response variable against the explanatory variables are presented in Figure \ref{Fig:1}. It is evident form the scatterplots that both data sets include outlying observations and the number of outliers in blood pressure data for females seems larger than those of males. The estimates of individual coefficients and standard errors of the estimates obtained for this dataset are shown in Table \ref{Tab5:female-male}. For both data sets gathered from 200 male and 200 female patients, our proposed weighted procedures yield slightly more efficient estimates than the OLS method, in general. The proposed $\widehat{\beta}_{wbe}$ have significantly better performances for estimating parameters compared to its traditional version. 

Moreover, we compared the predictive performances of the OLS and proposed methods. In doing so, the datasets are divided into the following two parts; the model is constructed based on the randomly selected 150 male and female patients, and the pulse rates of the remaining 50 patients are predicted using the estimated model parameters. This process is repeated 100 times, and for each time, the MSE for the predicted and observed pulse rates, $MSE = \frac{1}{50 \times 12} \sum_{i=1}^{50} \sum_{t=1}^{12} (y_{it} - \hat{y}_{it})^2$, are computed. The results are presented in Figure~\ref{Fig:pred}. This figure shows that both methods have similar MSE values. This is due to fact that the number of outliers is relatively small for the datasets with sample size $N \times T = 2400$.

\section{Conclusions} \label{Sec:5}
In this paper, we propose asymptotically valid, robust estimation procedures to obtain parameter estimates in linear panel data models with fixed and random effects. The proposed approaches are based on using weighted likelihood methodology. The finite sample performances of the proposed methods are examined through extensive simulation studies and a real-world example, and the results are compared with existing methods. Our records show that the proposed procedures have similar performance with existing estimation methods when the data have no outliers and/or under different error distributions. However, our proposed method produces more accurate and efficient parameter estimates with better power compared to the traditional OLS methods when outliers are presented in the data. As a part of our future research, we will investigate whether the proposed methods can also be used to estimate the parameters in linear dynamic panel data models as an alternative to the generalized method of moments.

\clearpage

\begin{table}
\centering
\caption{The MSEs of all estimators under $\text{N}(\mu = 0, \sigma^2 = 1)$ errors for $N = 25, 50, 100, 250$ (when $T$ is fixed, $T = 4$) and $T = 3, 8, 12, 25$ (when $N$ is fixed, $N = 50$)}
\scriptsize
\begin{tabular}{l c c c c| c c c c}
\hline 
& \multicolumn{4}{c|}{DGP-I: Fixed Effects Model} & \multicolumn{4}{c}{DGP-II: Random Effects Model}  \\
($N$, $T$) & (25, 4) & (50, 4) & (100, 4) & (250, 4) & (25, 4) & (50, 4) & (100, 4) & (250, 4) \\
\hline
$\widehat{\beta}_{pols}$ & 0.2290 & 0.2255 & 0.2232 & 0.2222 &  0.0436 & 0.0200 & 0.0097 & 0.0041 \\
$\widehat{\beta}_{wpols}$ & 0.2295 & 0.2255 & 0.2232 & 0.2221 & 0.0441 & 0.0201 & 0.0097 & 0.0041 \\
$\widehat{\beta}_{be}$ & 0.2727 & 0.2388 & 0.2298 & 0.2265 & 0.1801 & 0.0864 & 0.0401 & 0.0162 \\
$\widehat{\beta}_{wbe}$ & 0.2780 & 0.2396 & 0.2300 & 0.2267 & 0.1912 & 0.0886 & 0.0411 & 0.0163 \\
$\widehat{\beta}_{fe}$  & 0.2385 & 0.2320 & 0.2257 & 0.2227 & 0.0600 & 0.0262 & 0.0126 & 0.0055 \\
$\widehat{\beta}_{wfe}$ & 0.2386 & 0.2321 & 0.2257 & 0.2227 & 0.0607 & 0.0263 & 0.0126 & 0.0055 \\
$\widehat{\beta}_{re}$ & 0.2291 & 0.2257 & 0.2231 & 0.2221 & 0.0438 & 0.0199 & 0.0097 & 0.0041 \\
$\widehat{\beta}_{wre}$ & 0.2296 & 0.2257 & 0.2231 & 0.2221 & 0.0443 & 0.0200 & 0.0097 & 0.0042 \\
\hline
($N$, $T$) & (50, 3) & (50, 8) & (50, 12) & (50, 25) & (50, 3) & (50, 8) & (50, 12) & (50, 25) \\
\hline
$\widehat{\beta}_{pols}$ & 0.2302 & 0.2206 & 0.2189 & 0.2193 & 0.0260 & 0.0103 & 0.0064 & 0.0030 \\
$\widehat{\beta}_{wpols}$ & 0.2302 & 0.2207 & 0.2189 & 0.2193 & 0.0261 & 0.0104 & 0.0064 & 0.0030 \\
$\widehat{\beta}_{be}$ & 0.2476 & 0.2190 & 0.2067 & 0.1417 & 0.0791 & 0.0793 & 0.0764 & 0.0673 \\
$\widehat{\beta}_{wbe}$ & 0.2474 & 0.2207 & 0.2076 & 0.1432 & 0.0816 & 0.0822 & 0.0781 & 0.0685 \\
$\widehat{\beta}_{fe}$  & 0.2397 & 0.2243 & 0.2217 & 0.2220 & 0.0395 & 0.0119 & 0.0072 & 0.0032 \\
$\widehat{\beta}_{wfe}$ & 0.2399 & 0.2243 & 0.2217 & 0.2221 & 0.0397 & 0.0119 & 0.0072 & 0.0032 \\
$\widehat{\beta}_{re}$ & 0.2307 & 0.2207 & 0.2190 & 0.2193 & 0.0260 & 0.0103 & 0.0064 & 0.0030 \\
$\widehat{\beta}_{wre}$ & 0.2307 & 0.2208 & 0.2189 & 0.2193 & 0.0261 & 0.0103 & 0.0064 & 0.0030 \\
\hline
\end{tabular}
\label{Tab1:N_increasing_T_increasing}
\end{table}

\begin{table}
\centering
\caption{The MSEs of all estimators under different error distributions: $\text{N}(\mu = 0, \sigma^2 = 1)$, $t_5$ and $DExp(1)$ when $N = 100$, $T = 10$}
\scriptsize
\begin{tabular}{l c c c| c c c}
\hline 
Error & \multicolumn{3}{c|}{DGP-I: Fixed Effects Model} & \multicolumn{3}{c}{DGP-II: Random Effects Model}  \\
distributions & $\text{N}(\mu = 0, \sigma^2 = 1)$ & $t_5$ & $DExp(1)$ & $\text{N}(\mu = 0, \sigma^2 = 1)$ & $t_5$ & $DExp(1)$ \\ 
\hline
$\widehat{\beta}_{pols}$ & 0.2203 & 0.2193 & 0.2251 & 0.0037 & 0.0053 & 0.0057 \\
$\widehat{\beta}_{wpols}$ & 0.2203 & 0.2191 & 0.2250 & 0.0037 & 0.0051 & 0.0054 \\
$\widehat{\beta}_{be}$ & 0.2139 & 0.2308 & 0.2346 & 0.0373 & 0.0537 & 0.0601 \\
$\widehat{\beta}_{wbe}$ & 0.2146 & 0.2306 & 0.2358 & 0.0381 & 0.0540 & 0.0601 \\
$\widehat{\beta}_{fe}$ & 0.2215 & 0.2200 & 0.2263 & 0.0042 & 0.0062 & 0.0064 \\
$\widehat{\beta}_{wfe}$ & 0.2215 & 0.2199 & 0.2263 & 0.0042 & 0.0059 & 0.0061 \\
$\widehat{\beta}_{re}$ & 0.2203 & 0.2193 & 0.2251 & 0.0037 & 0.0054 & 0.0057 \\
$\widehat{\beta}_{wre}$ & 0.2203 & 0.2191 & 0.2250 & 0.0037 & 0.0051 & 0.0054 \\
\hline
\end{tabular}
\label{Tab2:error-dist-N-100-T-10}
\end{table}

\begin{table}
\centering
\caption{The MSEs of all estimators for $N_1 = 120$, $T_1 = 2$ and $N_2 = 80$, $T_2 = 3$ in the presence of 5\% and 10\% random and concentrated contamination by setting the number of outliers as $m = 12$ and $m = 24$}
\scriptsize
\begin{tabular}{l c c c c c c c c}
\hline
DGP-I & \multicolumn{4}{c}{Random contamination} & \multicolumn{4}{c}{Concentrated contamination} \\
\cmidrule{1-9}
($N_1$, $T_1$) = (120, 2)  & \multicolumn{2}{c}{Vertical outliers} & \multicolumn{2}{c}{Leverage points} & \multicolumn{2}{c}{Vertical outliers} & \multicolumn{2}{c}{Leverage points} \\ 
& 5\% & 10\% & 5\% & 10\% & 5\% & 10\% & 5\% & 10\% \\
\hline
$\widehat{\beta}_{pols}$ & 0.4002 & 1.2417 & 3.3460 & 9.9541 & 1.0588 & 2.4242 & 3.3060 & 9.9526 \\
$\widehat{\beta}_{wpols}$ & 0.2274 & 0.2228 & 0.2260 & 0.2280 & 0.2313 & 0.2302 & 0.2277 & 0.2289 \\
$\widehat{\beta}_{be}$ & 0.4502 & 1.3096 & 3.4462 & 10.1161 & 3.2571 & 6.2801 & 3.4380 & 9.7797 \\
$\widehat{\beta}_{wbe}$ & 0.2352 & 0.2265 & 0.2370 & 0.4014 & 0.2408 & 0.2390 & 0.2348 & 0.2428 \\
$\widehat{\beta}_{fe}$ & 0.4660 & 1.3876 & 3.4474 & 10.0996 & 0.4596 &  1.3367 & 3.6551 & 10.4212 \\
$\widehat{\beta}_{wfe}$ & 0.2330 & 0.2278 & 0.2329 & 0.4529 & 0.2364 & 0.2346 & 0.2330 & 0.2385 \\
$\widehat{\beta}_{re}$ & 0.4020 & 1.2458 & 3.3482 & 9.9689 & 0.4378 & 1.3038 & 3.0432 & 9.6394 \\
$\widehat{\beta}_{wre}$ & 0.2268 & 0.2219 & 0.2242 & 0.2265 & 0.2365 & 0.2359 & 0.2276 & 0.2290 \\
\hline 
($N_2$, $T_2$) = (80, 3) \\
$\widehat{\beta}_{pols}$ & 0.3746 & 1.1813 & 3.4531 & 9.7358 & 1.1149 & 2.6081 & 3.3712 & 10.0874 \\
$\widehat{\beta}_{wpols}$ & 0.2228 & 0.2250 & 0.2251 & 0.2262 & 0.2304 & 0.2230 &  0.2237 & 0.2248 \\
$\widehat{\beta}_{be}$ & 0.4705 & 1.4074 & 3.7326 & 10.0835 & 6.9713 & 14.2696 & 3.4437 & 10.3329 \\
$\widehat{\beta}_{wbe}$ & 0.2403 & 0.3131 & 0.2467 & 1.8689 & 0.2449 & 0.2364 & 0.2384 & 0.2591 \\
$\widehat{\beta}_{fe}$ & 0.4165 & 1.2425 & 3.4809 & 9.7985 & 0.4392 & 1.3105 & 3.6801 & 10.0476 \\
$\widehat{\beta}_{wfe}$ & 0.2176 & 0.2155 & 0.2234 & 0.6387 & 0.2329 & 0.2250 & 0.2310 & 0.2317 \\
$\widehat{\beta}_{re}$ & 0.3764 & 1.1805 & 3.4555 & 9.7389 & 0.4352 & 1.2977 & 3.1797 & 9.7787 \\
$\widehat{\beta}_{wre}$ & 0.2222 & 0.2239 & 0.2235 & 0.2236 & 0.2337 & 0.2266 & 0.2238 & 0.2236 \\
\hline
DGP-II & \multicolumn{4}{c}{Random contamination} & \multicolumn{4}{c}{Concentrated contamination} \\
\cmidrule{1-9}
($N_1$, $T_1$) = (120, 2) & \multicolumn{2}{c}{Vertical outliers} & \multicolumn{2}{c}{Leverage points} & \multicolumn{2}{c}{Vertical outliers} & \multicolumn{2}{c}{Leverage points} \\ 
& 5\% & 10\% & 5\% & 10\% & 5\% & 10\% & 5\% & 10\% \\
\hline
$\widehat{\beta}_{pols}$ & 0.3787 & 1.3251 & 3.5885 & 10.2682 & 1.3436 & 3.1538 & 3.7369 & 9.9809 \\
$\widehat{\beta}_{wpols}$ & 0.0190 & 0.0219 & 0.0340 & 0.0646 & 0.0176 & 0.0188 & 0.0371 & 0.0614 \\
$\widehat{\beta}_{be}$ & 0.4505 & 1.4368 & 3.7069 & 10.3196 & 4.4839 & 9.2887 & 3.6873 & 9.7482 \\
$\widehat{\beta}_{wbe}$ & 0.0480 & 0.0858 & 0.0976 & 0.3368 & 0.0361 & 0.0387 & 0.0963 & 0.2151 \\
$\widehat{\beta}_{fe}$ & 0.4508 & 1.4608  & 3.6715 & 10.5080 & 0.4970 & 1.4487 & 4.4034 & 11.2726 \\
$\widehat{\beta}_{wfe}$ & 0.0483 & 0.0999  & 0.1078 & 0.5219 & 0.0386 & 0.0441 & 0.0400 & 0.0482 \\
$\widehat{\beta}_{re}$ & 0.3798 & 1.3272 & 3.5925 & 10.2741 & 0.4703 & 1.4108 & 3.4956 & 9.7057 \\
$\widehat{\beta}_{wre}$ & 0.0191 & 0.0225 & 0.0352 & 0.0686 & 0.0358 & 0.0373 & 0.0362 & 0.0618 \\
\hline 
($N_2$, $T_2$) = (80, 3) \\
$\widehat{\beta}_{pols}$ & 0.3784 & 1.2896 & 3.6960 & 10.2332 & 1.4453 & 3.3062 & 3.7121 & 10.2756 \\
$\widehat{\beta}_{wpols}$ & 0.0173 & 0.0214 & 0.0358 & 0.0686 & 0.0183 & 0.0192 & 0.0346 & 0.0584 \\
$\widehat{\beta}_{be}$ & 0.5166 & 1.5569 &  3.8817 & 10.5179 & 10.2610 & 20.0915 & 3.5058 & 9.7968 \\
$\widehat{\beta}_{wbe}$ & 0.0900 & 0.4135 &  0.2439 & 2.2413 & 0.0589 & 0.0569 & 0.2733 & 0.3970 \\
$\widehat{\beta}_{fe}$ & 0.4176 & 1.3365 &  3.7580 & 10.2875 & 0.4187 & 1.4138 & 4.1639 & 11.4456 \\
$\widehat{\beta}_{wfe}$ & 0.0421 & 0.1894 &  0.1248 & 0.9373 & 0.0278 & 0.0324 & 0.0292 & 0.0322 \\
$\widehat{\beta}_{re}$ & 0.3791 & 1.2919 &   3.7027 & 10.2412 & 0.4143 & 1.4053 & 3.5679 & 10.0512 \\
$\widehat{\beta}_{wre}$ & 0.0174 & 0.0218 &  0.0360 & 0.0715 & 0.0271 & 0.0310 & 0.0335 & 0.0570 \\
\hline
\end{tabular}
\label{Tab3:outliers-N-120-80-T-2-3-mse}
\end{table}

\begin{landscape} 
\begin{table}
\centering
\caption{The powers of all estimators when $N_1 = 120$, $T_1 = 2$ and $N_2 = 80$, $T_2 = 3$}
\scriptsize
\begin{tabular}{l c c c c c c c c l c c c c c c c c}
\hline 
DGP-I & \multicolumn{4}{c}{Random contamination} & \multicolumn{4}{c}{Concentrated contamination} & DGP-II & \multicolumn{4}{c}{Random contamination} & \multicolumn{4}{c}{Concentrated contamination} \\
\cmidrule{1-18}
($N_1$, $T_1$) & \multicolumn{2}{c}{Vertical outliers} & \multicolumn{2}{c}{Leverage points} & \multicolumn{2}{c}{Vertical outliers} & \multicolumn{2}{c}{Leverage points} & ($N_1$, $T_1$) & \multicolumn{2}{c}{Vertical outliers} & \multicolumn{2}{c}{Leverage points} & \multicolumn{2}{c}{Vertical outliers} & \multicolumn{2}{c}{Leverage points} \\ 
(120, 2) & 5\% & 10\% & 5\% & 10\% & 5\% & 10\% & 5\% & 10\% & (120, 2) & 5\% & 10\% & 5\% & 10\% & 5\% & 10\% & 5\% & 10\% \\
\hline
$\widehat{\beta}_{1,pols}$ & 1.000 & 0.996 & 0.887 & 0.368 & 0.908 & 0.499 & 0.887 & 0.358 & $\widehat{\beta}_{1,pols}$ & 1.000 & 0.988 & 0.769 & 0.410 & 0.747 & 0.313 & 0.752 & 0.394 \\
$\widehat{\beta}_{1,wpols}$ & 1.000 & 1.000 & 1.000 & 0.999 & 1.000 & 1.000 & 1.000 & 0.999 & $\widehat{\beta}_{1,wpols}$ & 1.000 & 1.000 & 0.997 & 0.996 & 1.000 & 1.000 & 0.999 & 0.999 \\
$\widehat{\beta}_{1,be}$ & 1.000 & 0.981 & 0.789 & 0.258 & 0.480 & 0.178 & 0.768 & 0.341 & $\widehat{\beta}_{1,be}$ & 1.000 & 0.946 & 0.652 &  0.269 & 0.327 & 0.139 & 0.696 & 0.350 \\
$\widehat{\beta}_{1,wbe}$ & 1.000 & 1.000 & 1.000 & 0.995 & 1.000 & 1.000 & 1.000 & 0.999 &  $\widehat{\beta}_{1,wbe}$ & 1.000 & 1.000 & 0.997 & 0.994 & 1.000 & 1.000 & 0.999 & 0.999 \\
$\widehat{\beta}_{1,fe}$ & 1.000 & 0.970 & 0.786 & 0.251 & 1.000 & 0.960 & 0.834 & 0.336 & $\widehat{\beta}_{1,fe}$ & 0.927 & 0.949 & 0.654 &   0.285 & 0.996 & 0.919 & 0.597 & 0.354 \\
$\widehat{\beta}_{1,wfe}$ & 1.000 & 1.000 & 1.000 & 0.985 & 1.000 & 1.000 & 1.000 & 0.999 & $\widehat{\beta}_{1,wfe}$ & 1.000 & 1.000 & 0.997 &  0.986 & 1.000 & 1.000 & 0.999 & 0.999 \\
$\widehat{\beta}_{1,re}$ & 1.000 & 0.996 & 0.888 & 0.366 & 1.000 & 0.966 & 0.908 & 0.365 & $\widehat{\beta}_{1,re}$ & 1.000 & 0.987 & 0.771 &   0.413 & 1.000 & 0.934 & 0.767 & 0.400 \\
$\widehat{\beta}_{1,wre}$ & 1.000 & 1.000 & 1.000 & 0.999 & 1.000 & 1.000 & 1.000 & 0.999 & $\widehat{\beta}_{1,wre}$ & 1.000 & 1.000 & 0.997 &   0.996 & 1.000 & 1.000 & 0.999 & 0.999 \\
\hline
$\widehat{\beta}_{2,pols}$ & 0.945 & 0.604 & 0.379 & 0.650 & 0.217 & 0.090 & 0.406 & 0.673 & $\widehat{\beta}_{2,pols}$ & 0.989 & 0.816 & 0.580 & 0.495 & 0.294 & 0.121 & 0.532 & 0.493 \\
$\widehat{\beta}_{2,wpols}$ & 1.000 & 1.000 & 1.000 & 0.999 & 1.000 & 1.000 & 1.000 & 0.999 & $\widehat{\beta}_{2,wpols}$ & 1.000 & 1.000 & 0.997 & 0.996 & 1.000 & 1.000 & 0.999 & 0.999 \\
$\widehat{\beta}_{2,be}$ & 0.784 & 0.387 & 0.265 & 0.480 & 0.107 & 0.059 & 0.445 & 0.441 & $\widehat{\beta}_{2,be}$ & 0.909 & 0.559 & 0.440 &  0.336 & 0.123 & 0.069 & 0.544 & 0.419 \\
$\widehat{\beta}_{2,wbe}$ & 1.000 & 1.000 & 1.000 & 0.995 & 1.000 & 1.000 & 1.000 & 0.999 & $\widehat{\beta}_{2,wbe}$ & 1.000 & 1.000 & 0.997 &  0.994 & 1.000 & 1.000 & 0.999 & 0.999 \\
$\widehat{\beta}_{2,fe}$ & 0.772 & 0.366 & 0.267 & 0.475 & 0.774 & 0.422 & 0.315 & 0.584 & $\widehat{\beta}_{2,fe}$ & 0.998 & 0.545 & 0.449 &   0.342 & 0.903 & 0.586 & 0.417 & 0.263 \\
$\widehat{\beta}_{2,wfe}$ & 1.000 & 1.000 & 1.000 & 0.986 & 1.000 & 1.000 & 1.000 & 0.999 & $\widehat{\beta}_{2,wfe}$ & 1.000 & 0.999 & 0.996 &  0.981 & 1.000 & 1.000 & 0.999 & 0.999 \\
$\widehat{\beta}_{2,re}$ & 0.944 & 0.602 & 0.380 & 0.648 & 0.792 & 0.431 & 0.406 & 0.661 & $\widehat{\beta}_{2,re}$ & 0.987 & 0.818 & 0.578 &  0.498 & 0.912 & 0.596 & 0.556 & 0.480 \\
$\widehat{\beta}_{2,wre}$ & 1.000 & 1.000 & 1.000 & 0.999 & 1.000 & 1.000 & 1.000 & 0.999 & $\widehat{\beta}_{2,wre}$ & 1.000 & 1.000 & 0.997 &  0.996 & 1.000 & 1.000 & 0.999 & 0.999 \\
\hline 
DGP-I & \multicolumn{4}{c}{Random contamination} & \multicolumn{4}{c}{Concentrated contamination} & DGP-II & \multicolumn{4}{c}{Random contamination} & \multicolumn{4}{c}{Concentrated contamination} \\
\cmidrule{1-18}
($N_2$, $T_2$) & \multicolumn{2}{c}{Vertical outliers} & \multicolumn{2}{c}{Leverage points} & \multicolumn{2}{c}{Vertical outliers} & \multicolumn{2}{c}{Leverage points} & ($N_2$, $T_2$) & \multicolumn{2}{c}{Vertical outliers} & \multicolumn{2}{c}{Leverage points} & \multicolumn{2}{c}{Vertical outliers} & \multicolumn{2}{c}{Leverage points} \\ 
(80, 3) & 5\% & 10\% & 5\% & 10\% & 5\% & 10\% & 5\% & 10\% & (80, 3) & 5\% & 10\% & 5\% & 10\% & 5\% & 10\% & 5\% & 10\% \\
\hline
$\widehat{\beta}_{1,pols}$ & 0.999 & 0.994 & 0.864 & 0.361 & 0.902 & 0.499 & 0.890 & 0.374 & $\widehat{\beta}_{1,pols}$ & 1.000 & 0.992 & 0.776 & 0.415 & 0.731 & 0.340 & 0.735 & 0.409 \\
$\widehat{\beta}_{1,wpols}$ & 0.999 & 0.998 & 0.999 & 0.996 & 1.000 & 1.000 & 1.000 & 1.000 & $\widehat{\beta}_{1,wpols}$ & 1.000 & 1.000 & 0.998 & 0.995 & 1.000 & 1.000 & 0.997 & 0.999 \\
$\widehat{\beta}_{1,be}$ & 0.998 & 0.939 & 0.693 & 0.196 & 0.237 & 0.137 & 0.719 & 0.397 & $\widehat{\beta}_{1,be}$ & 0.995 & 0.868 & 0.591 &  0.224 & 0.182 & 0.103 & 0.710 & 0.354 \\
$\widehat{\beta}_{1,wbe}$ & 0.999 & 0.998 & 0.999 & 0.926 & 1.000 & 1.000 & 1.000 & 1.000 & $\widehat{\beta}_{1,wbe}$ & 1.000 & 0.996 & 0.998 &  0.914 & 1.000 & 1.000 & 0.996 & 0.997 \\
$\widehat{\beta}_{1,fe}$ & 0.999 & 0.989 & 0.817 & 0.285 & 0.999 & 0.982 & 0.872 & 0.374 & $\widehat{\beta}_{1,fe}$ & 1.000 & 0.982 & 0.711 &  0.337 & 1.000 & 0.961 & 0.628 & 0.399 \\
$\widehat{\beta}_{1,wfe}$ & 0.999 & 0.998 & 0.999 & 0.976 & 1.000 & 1.000 & 1.000 & 1.000 & $\widehat{\beta}_{1,wfe}$ & 1.000 & 1.000 & 0.998 &  0.963 & 1.000 & 1.000 & 0.997 & 0.999 \\
$\widehat{\beta}_{1,re}$ & 0.999 & 0.994 & 0.862 & 0.361 & 1.000 & 0.984 & 0.906 & 0.384 & $\widehat{\beta}_{1,re}$ & 1.000 & 0.992 & 0.776 &  0.418 & 1.000 & 0.966 & 0.757 & 0.406 \\
$\widehat{\beta}_{1,wre}$ & 0.999 & 0.998 & 0.999 & 0.996 & 1.000 & 1.000 & 1.000 & 1.000 & $\widehat{\beta}_{1,wre}$ & 1.000 & 1.000 & 0.998 &  0.995 & 1.000 & 1.000 & 0.997 & 0.999 \\
\hline
$\widehat{\beta}_{2,pols}$ & 0.952 & 0.638 & 0.399 & 0.627 & 0.205 & 0.108 & 0.404 & 0.652 & $\widehat{\beta}_{2,pols}$ & 0.988 & 0.829 & 0.577 & 0.507 & 0.239 & 0.121 & 0.577 & 0.519 \\
$\widehat{\beta}_{2,wpols}$ & 0.999 & 0.998 & 0.999 & 0.996 & 1.000 & 1.000 & 1.000 & 1.000 & $\widehat{\beta}_{2,wpols}$ & 1.000 & 1.000 & 0.998 & 0.995 & 1.000 & 1.000 & 0.997 & 0.999 \\
$\widehat{\beta}_{2,be}$ & 0.672 & 0.308 & 0.217 & 0.354 & 0.079 & 0.059 & 0.503 & 0.423 & $\widehat{\beta}_{2,be}$ & 0.811 & 0.452 & 0.363 &  0.258 & 0.076 & 0.052 & 0.570 & 0.421 \\
$\widehat{\beta}_{2,wbe}$ & 0.999 & 0.972 & 0.999 & 0.903 & 1.000 & 1.000 & 1.000 & 0.997 & $\widehat{\beta}_{2,wbe}$ & 1.000 & 0.968 & 0.997 &  0.913 & 1.000 & 1.000 & 0.993 & 0.995 \\
$\widehat{\beta}_{2,fe}$ & 0.884 & 0.480 & 0.321 & 0.522 & 0.850 & 0.466 & 0.317 & 0.666 & $\widehat{\beta}_{2,fe}$ & 0.955 & 0.673 & 0.491 &  0.402 & 0.938 & 0.681 & 0.416 & 0.253 \\
$\widehat{\beta}_{2,wfe}$ & 0.999 & 0.998 & 0.999 & 0.937 & 1.000 & 1.000 & 1.000 & 1.000 & $\widehat{\beta}_{2,wfe}$ & 1.000 & 0.998 & 0.997 &  0.947 & 1.000 & 1.000 & 0.997 & 0.999 \\
$\widehat{\beta}_{2,re}$ & 0.953 & 0.641 & 0.396 & 0.632 & 0.853 & 0.470 & 0.412 & 0.641 & $\widehat{\beta}_{2,re}$ & 0.989 & 0.831 & 0.580 &  0.506 & 0.941 & 0.683 & 0.586 & 0.494 \\
$\widehat{\beta}_{2,wre}$ & 0.999 & 0.998 & 0.999 & 0.996 & 1.000 & 1.000 & 1.000 & 1.000 & $\widehat{\beta}_{2,wre}$ & 1.000 & 1.000 & 0.998 &  0.995 & 1.000 & 1.000 & 0.997 & 0.999 \\
\hline
\end{tabular}
\label{Tab4:outliers-power}
\end{table}
\end{landscape}

\begin{table}
\centering
\caption{The estimates of individual coefficients (upper rows) and standard errors of the estimates (lower rows) for blood pressure data of female and male patients}
\scriptsize
\begin{tabular}{l c l c | l c l c}
\cmidrule{1-8} 
\multicolumn{4}{c|}{Female patients} & \multicolumn{4}{c}{Male patients} \\ 
\cmidrule{1-8} 
$\widehat{\beta}_{1,pols}$ & -0.109 & $\widehat{\beta}_{2,pols}$ & 0.414 & $\widehat{\beta}_{1,pols}$ & -0.030 & $\widehat{\beta}_{2,pols}$ & 0.335 \\
 & (0.015) &  & (0.024) &  & (0.017) &  & (0.024) \\
$\widehat{\beta}_{1,wpols}$ & -0.114 & $\widehat{\beta}_{2,wpols}$ & 0.423 & $\widehat{\beta}_{1,wpols}$ & -0.022 & $\widehat{\beta}_{2,wpols}$ & 0.334 \\
 & (0.014) &  & (0.023) &  & (0.016) &  & (0.022) \\
$\widehat{\beta}_{1,be}$ & -0.250 & $\widehat{\beta}_{2,be}$ & 0.583 & $\widehat{\beta}_{1,be}$ & -0.118 & $\widehat{\beta}_{2,be}$ & 0.352 \\
 & (0.052) &  & (0.092) &  & (0.061) &  & (0.086) \\
$\widehat{\beta}_{1,wbe}$ & -0.255 & $\widehat{\beta}_{2,wbe}$ & 0.589 & $\widehat{\beta}_{1,wbe}$ & -0.116 & $\widehat{\beta}_{2,wbe}$ & 0.346  \\
 & (0.014) &  & (0.025) &  & (0.016) &  & (0.023) \\
$\widehat{\beta}_{1,fe}$ &  0.101 & $\widehat{\beta}_{2,fe}$ & 0.214 & $\widehat{\beta}_{1,fe}$ & 0.076 & $\widehat{\beta}_{2,fe}$ & 0.301 \\
 & (0.015) &  & (0.022) &  & (0.017) &  & (0.023) \\
$\widehat{\beta}_{1,wfe}$ & 0.098 & $\widehat{\beta}_{2,wfe}$ & 0.216 & $\widehat{\beta}_{1,wfe}$ & 0.081 & $\widehat{\beta}_{2,wfe}$ & 0.299 \\
 & (0.013) &  & (0.019) &  & (0.016) &  & (0.021) \\
$\widehat{\beta}_{1,re}$ & 0.070 & $\widehat{\beta}_{2,re}$ & 0.241 & $\widehat{\beta}_{1,re}$ & 0.061 & $\widehat{\beta}_{2,re}$ & 0.307 \\
 & (0.015) &  & (0.021) &  & (0.017) &  & (0.023) \\
$\widehat{\beta}_{1,wre}$ & 0.065 & $\widehat{\beta}_{2,wre}$ & 0.246 & $\widehat{\beta}_{1,wre}$ & 0.066 & $\widehat{\beta}_{2,wre}$ & 0.307 \\
 & (0.013) &  & (0.018) &  & (0.015) &  & (0.020) \\
\cmidrule{1-8}  
\end{tabular}
\label{Tab5:female-male} 
\end{table}

\clearpage
\begin{figure}[!htbp]
  \centering
  \includegraphics[width=7cm]{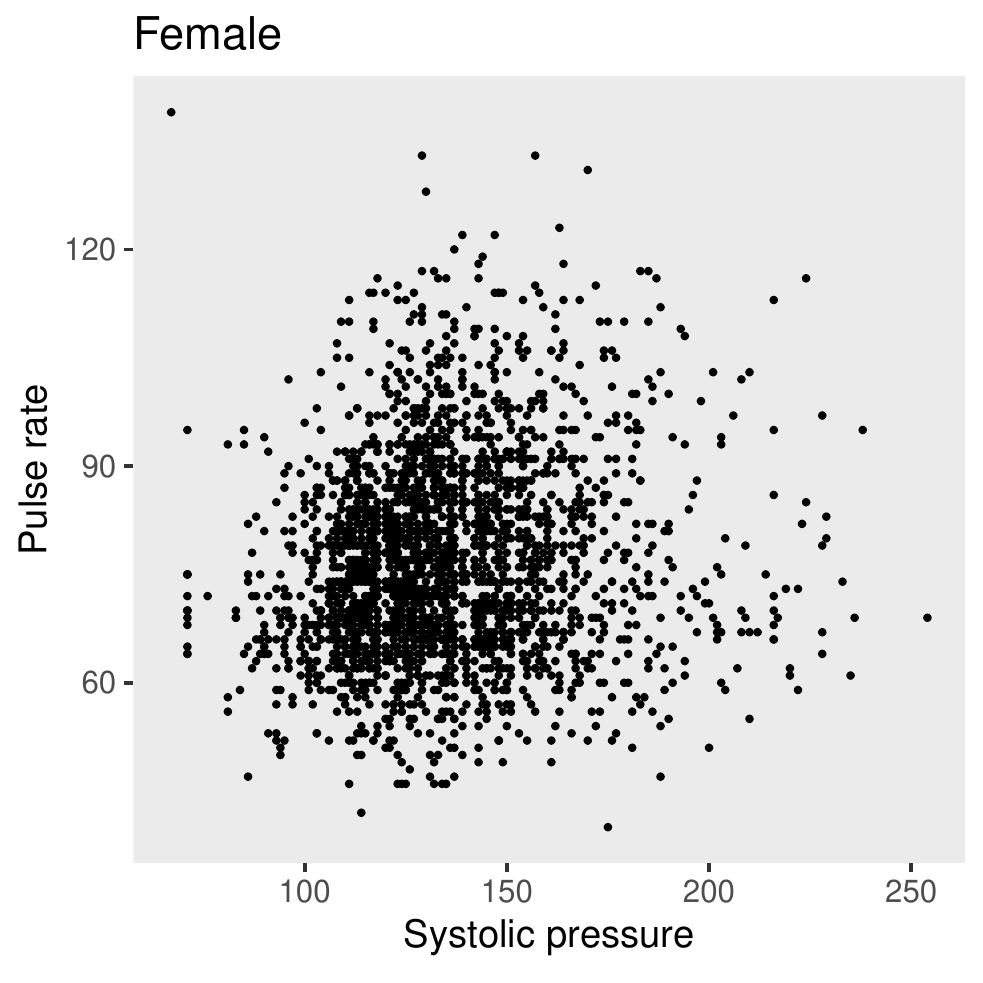}
  \includegraphics[width=7cm]{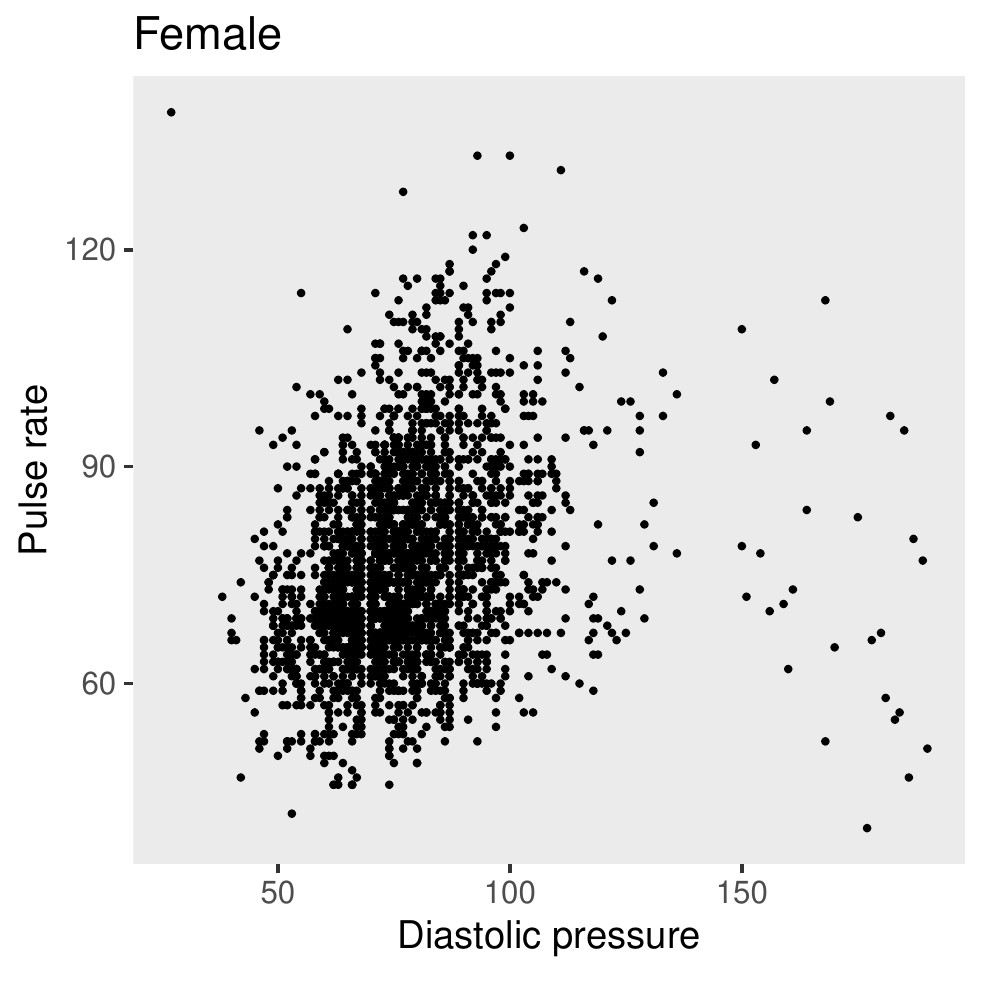}
\\
  \includegraphics[width=7cm]{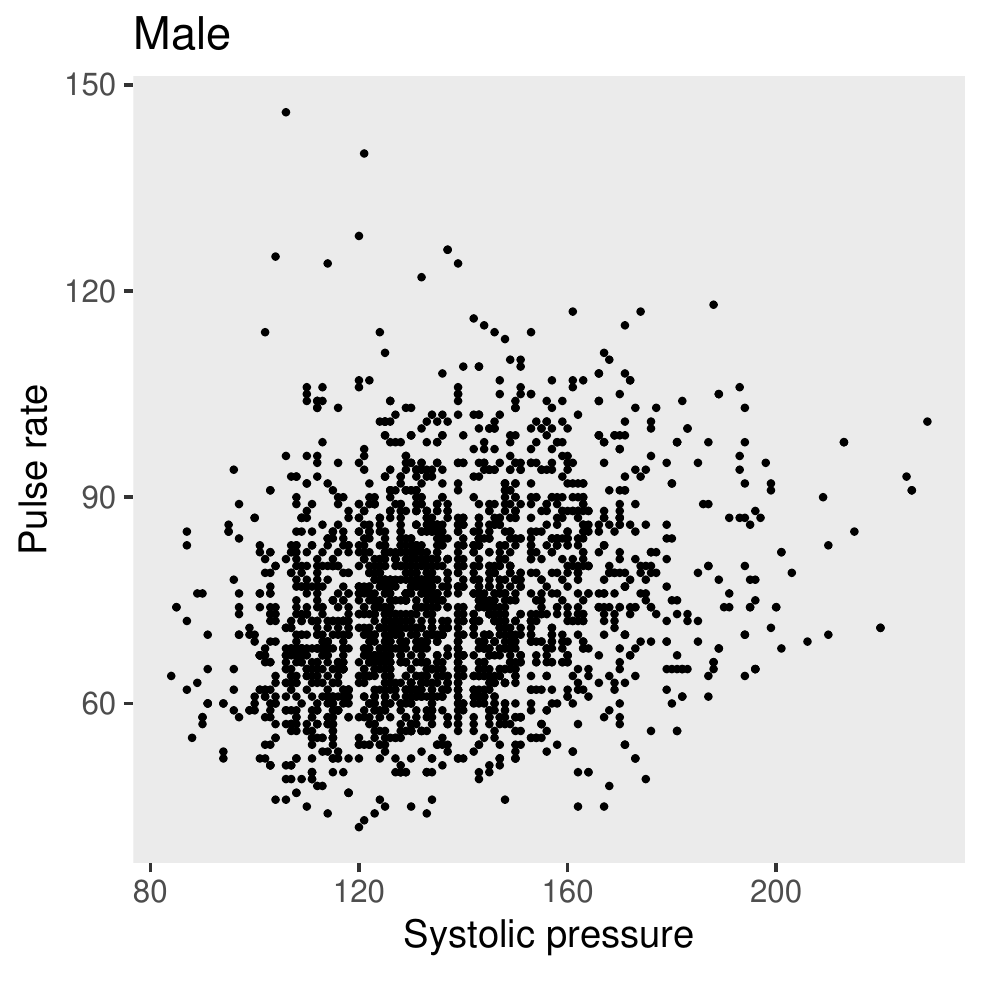}
  \includegraphics[width=7cm]{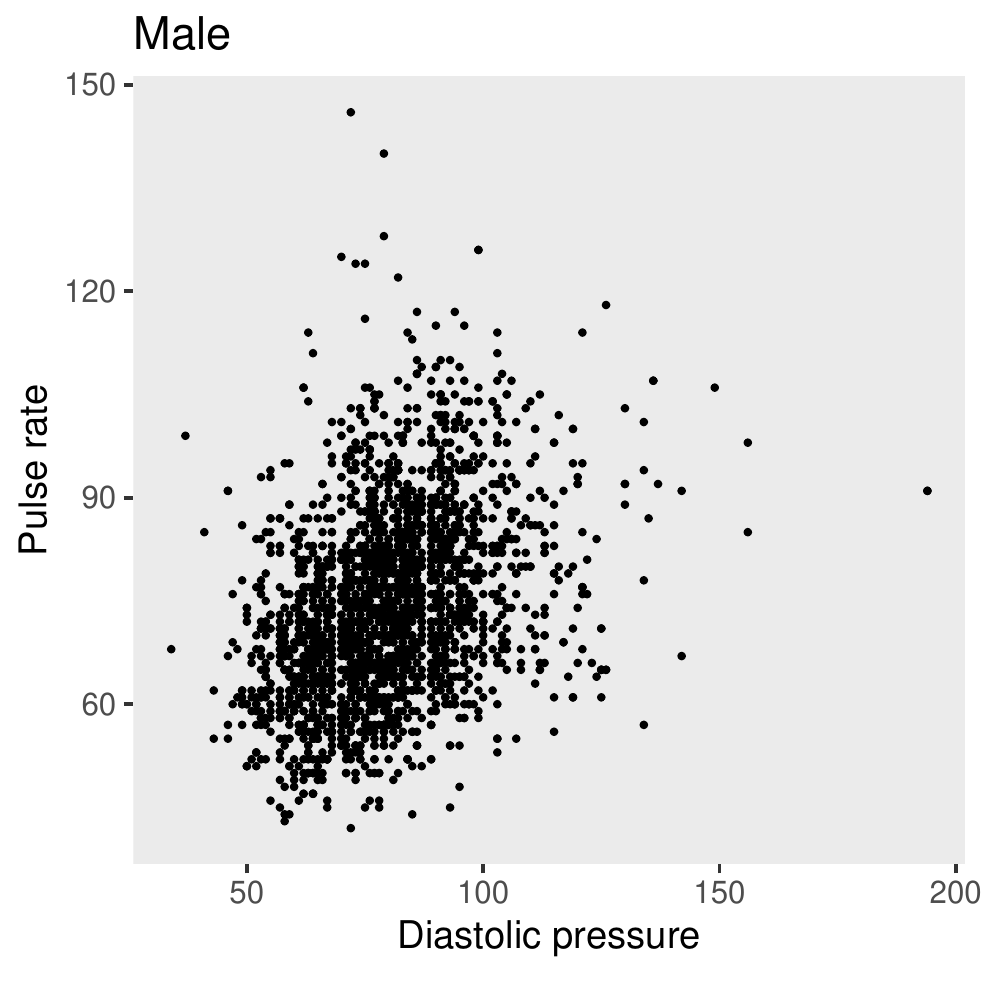}  
  \caption{The scatter plots of the pulse rates against the systolic and diastolic blood pressures for female and male patients.}
  \label{Fig:1}
\end{figure}

\begin{figure}[!htbp]
  \centering
  \includegraphics[width=7cm]{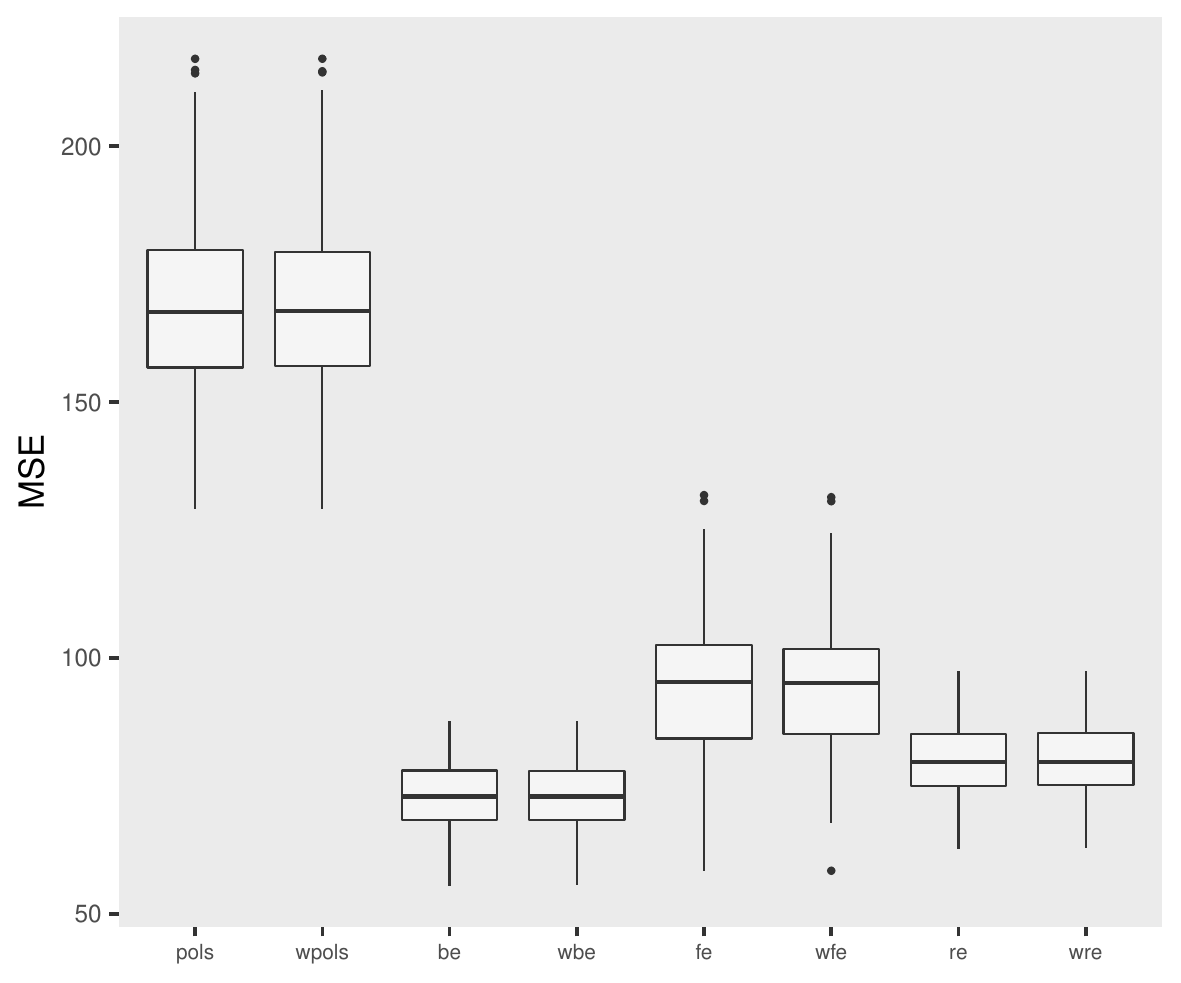}
  \includegraphics[width=7cm]{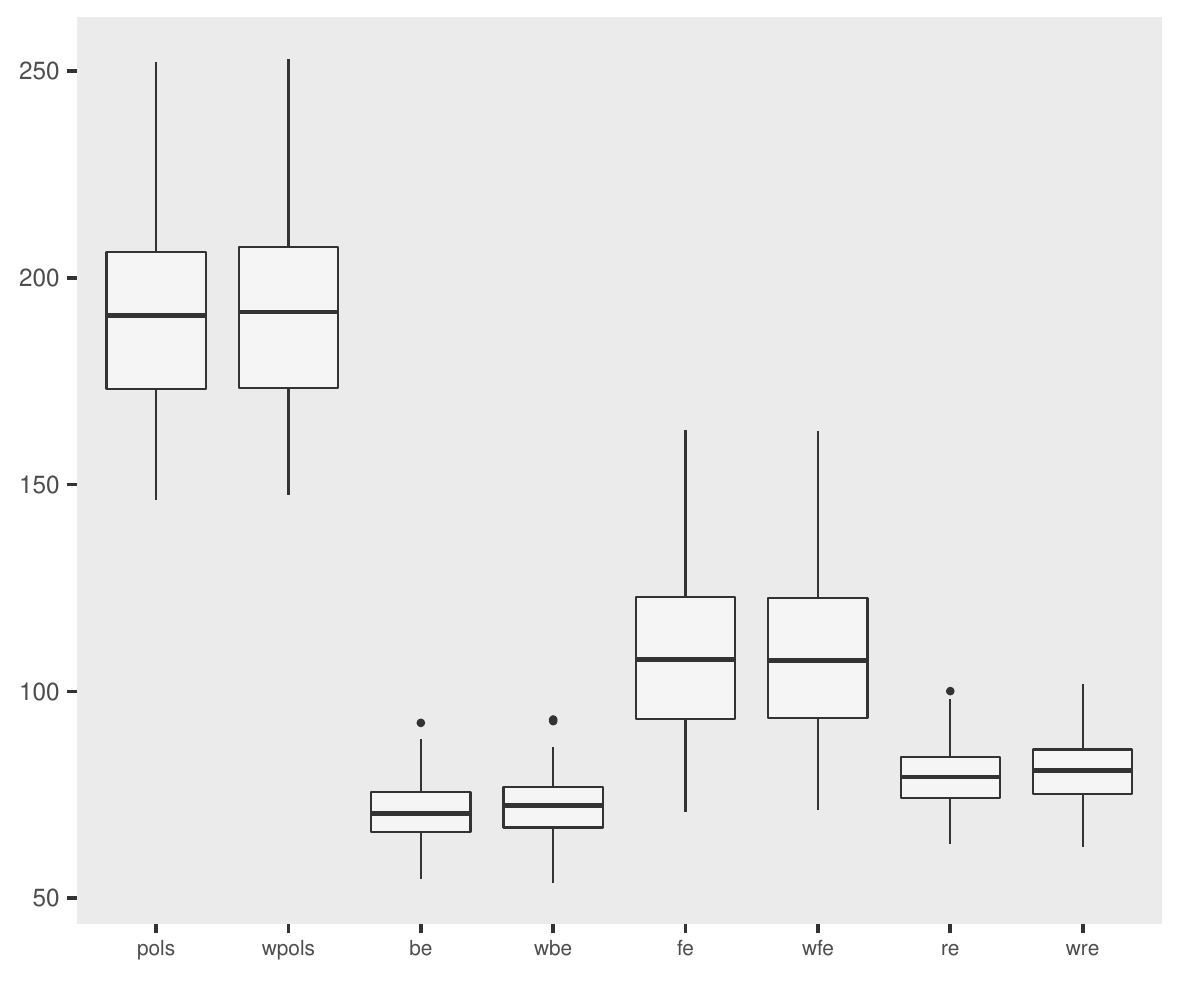}
  \caption{Calculated MSE values for the blood pressure dataset: male (first column) and female (second column). Methods: OLS based pooled regression (pols), between regression (be), fixed effects model (fe), random effects model (re), and their weighted likelihood counterparts; wpols, wbe, wfe, and wre.}
  \label{Fig:pred}
\end{figure}

\end{document}